\documentclass[sigconf,nonacm]{acmart}
\usepackage{enumitem}
\usepackage{balance}

\usepackage{amsmath,amssymb,amsfonts,amsthm}
\usepackage{algorithmic}
\usepackage{graphicx}
\usepackage{textcomp}
\usepackage{xcolor}
\usepackage{amsthm}
\usepackage{booktabs}
\usepackage{arydshln} 

\usepackage[table]{xcolor}
\usepackage{float}
\usepackage{makecell}
\usepackage{alltt}

\definecolor{randcol}{RGB}{214,183,165}
\definecolor{kindonlycol}{RGB}{190,143,112}
\definecolor{kindsemcol}{RGB}{123,125,103}
\definecolor{fullfpcol}{RGB}{165,162,132}
\definecolor{theorycol}{RGB}{205,203,188}

\usepackage{multirow}
\usepackage{float}
\usepackage{fontawesome}

\usepackage{colortbl}
\usepackage{threeparttable}
\usepackage{tikz}
\usetikzlibrary{positioning,arrows.meta,shapes.geometric,fit,calc}

\definecolor{MinColor}{rgb}{1, 0.9, 0.9}
\definecolor{MaxColor}{rgb}{0.6, 0, 0}
\definecolor{lightorange}{HTML}{FFA07A}
\definecolor{lightblue}{HTML}{ADD8E6}
\definecolor{fieldbg}{HTML}{FFF1C7}
\definecolor{revcolor}{rgb}{0, 0, 0}

\usepackage{tcolorbox}
\usepackage{subcaption}
\usepackage{listings}
\usepackage{xcolor}

\usepackage{hyperref}
\hypersetup{
    colorlinks=true,
    linkcolor=blue,
    filecolor=magenta,
    urlcolor=teal,
    }

\usepackage{url}

\usepackage{pifont}
\usepackage{bbding}
\newcommand{\cmark}{\textcolor{green!80!black}{\ding{51}}}
\newcommand{\xmark}{\textcolor{red}{\ding{55}}}
\newcommand{\pmark}{\textcolor{orange!90!black}{\raisebox{0.15ex}
{\scriptsize$\blacktriangle$}}}
\newcommand{\codefield}[1]{\begingroup\setlength{\fboxsep}{1pt}\colorbox{fieldbg}{\strut\texttt{#1}}\endgroup}

\newtheorem{theorem}{Theorem}

\newcommand{\heading}[1]{{\vspace{3pt}\noindent{\textbf{#1}}}}

\newenvironment{packeditemize}{
	\begin{list}{$\bullet$}{
			\setlength{\labelwidth}{4pt}
			\setlength{\itemsep}{0pt}
			\setlength{\leftmargin}{\labelwidth}
			\addtolength{\leftmargin}{\labelsep}
			\setlength{\parindent}{0pt}
			\setlength{\listparindent}{\parindent}
			\setlength{\parsep}{0pt}
			\setlength{\topsep}{1pt}}}{\end{list}}

\renewcommand\footnotetextcopyrightpermission[1]{} 
\begin{document}

\title{Five Attacks on x402 Agentic Payment Protocol}


\author{Zelin Li\textsuperscript{1},
Qin Wang\textsuperscript{2},
Zhipeng Wang\textsuperscript{3,\;{\footnotesize \faEnvelopeO}}}

\affiliation{%
\vspace{5pt}
  \institution{\textit{$^1$Ohio State University, US} $|$ \textit{$^2$CSIRO, Australia} $|$ \textit{$^3$The University of Manchester, UK}}
  \city{}
  \country{}
}

\begin{abstract}

The \textit{x402} protocol revives the HTTP \textit{402 Payment Required} status code to enable web-native micropayments across APIs, content, and agents. It combines synchronous HTTP authorization with asynchronous blockchain settlement and introduces a cross-layer attack surface absent from conventional web and on-chain payments.

In this paper, we formally analyze x402 and empirically show that it is vulnerable in both design and implementation. We present five concrete attacks that reveal weaknesses in authorization, binding, replay protection, and web-layer handling, showing that x402 is vulnerable across multiple stages of the payment workflow.

We validate these attacks through a reproducible testbed on local chains, Base Sepolia, and live endpoints and further audit three open-source SDKs and endpoints. Our results show that all five attacks are practical and can cause either \textit{unpaid} service or \textit{paid-but-denied} outcomes. We also propose practical mitigations.

\end{abstract}

\keywords{Blockchain, Agents, Payment, x402, HTTP status code}

\maketitle

\pagestyle{plain}
\thispagestyle{plain}

\section{Introduction}
\label{sec-intro}

The x402 protocol~\cite{x402-whitepaper} revives the long-reserved HTTP status code \texttt{402 Payment Required} as a web-native micro-payment handshake. Consider an autonomous AI agent that queries a weather API monetized through \textit{x402} (Fig.\ref{fig:x402model}). The interaction begins like an ordinary web request: the agent sends an HTTP \texttt{GET}, and the server replies with \texttt{402 Payment Required}, quoting a price such as 0.01\,USDC on Base. The agent signs a \textit{Payment Payload}. \ding{172} It then resends the request with this payload in the \texttt{X-PAYMENT} header. \ding{173} The server asks an off-chain facilitator to verify and settle the payment. \ding{174} The facilitator submits the transaction on chain. \ding{175} The server finally grants the requested resource. The result is a seconds-long paid API exchange, without API keys, account setup, or human intervention.

 \begin{figure}[t]
     \centering
    \includegraphics[width=0.91\linewidth]{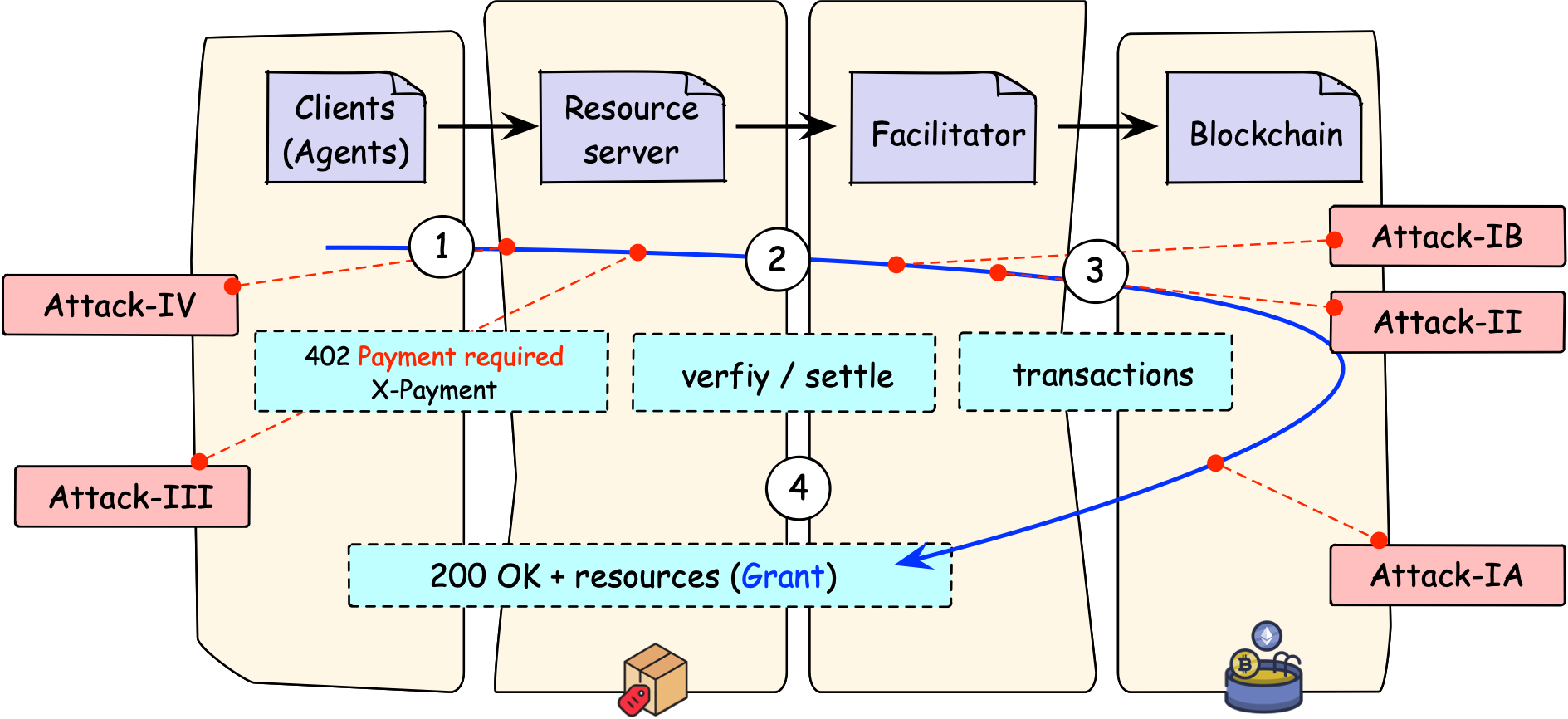}
    \vspace{-3pt}
   \caption{Simplified x402 workflow and our attack points. 
\ding{172} The client resends the request with \texttt{X-PAYMENT}.
\ding{173} The server asks the facilitator to verify and settle.
\ding{174} The facilitator submits the transaction on chain.
\ding{175} The server grants the resource after settlement. Our five attack points are in red.} 
\label{fig:x402-workflow}
     \label{fig:x402model}
     \vspace{-0.15in}
 \end{figure}

Since its public release by Coinbase in May 2025, the x402 ecosystem has grown rapidly: open-source SDKs exist in TypeScript, Python, Go, and .NET; a discovery layer (\emph{Bazaar}) catalogs over 13{,}000 registered resource servers; and the protocol is cited as the payment backbone for Agent-to-Agent (A2A) commerce~\cite{zhang2026sok,a2aproto}. 

\smallskip
Yet the apparent simplicity of x402 masks a \emph{cross-layer trust} gap. The HTTP layer operates synchronously: a request either succeeds or fails within milliseconds. The blockchain layer, by contrast, offers only \emph{probabilistic finality}: a transaction included in block~$b$ may be reverted by a reorganization of depth~$k$. x402 bridges these two layers through an off-chain facilitator whose correctness is neither enforced by the protocol nor verifiable by the client. 

This mismatch spans the phases and parties in the payment workflow, creating a large attack surface. Even before payment begins, the discovery layer may steer agents toward adversarial resource servers. Once the agent enters the payment flow, the \texttt{X-PAYMENT} header becomes a bearer-style payment capability that can be replayed or cached by intermediaries (\ding{172}). The server then relies on an off-chain facilitator for verification and settlement decisions, so a non-final or incorrect verification signal may still trigger access (\ding{173}). The settlement path may fail to bind the on-chain caller, facilitator, resource, and service decision into one atomic object (\ding{174}). Finally, after the server releases the resource, later settlement failure or cache reuse cannot be naturally rolled back (\ding{175}).

\vspace{3pt}
x402 should be treated as a cross-layer security protocol, not merely as an HTTP payment wrapper. However, to our knowledge, x402 has received rare prior formal security analysis.  Industry documentation focuses mainly on integration guides and happy-path examples, while the protocol specification omits several security-critical details. 
Existing academic work on payment channels~\cite{gudgeon2020sok,aumayr2022sleepy}, probabilistic micropayments~\cite{almashaqbeh2020microcash}, service execution \cite{li2026a402}, and API security~\cite{jayasuriya2024understanding} does not fully address x402's boundaries, where synchronous web authorization is tightly coupled with asynchronous blockchain settlement.

\begin{table*}[t]
\centering
\caption{Summary of our attacks on x402 payment protocols. RGP = revert-grant probability; DGR = duplicate-grant rate.}
\label{tab:attack-summary}
\scriptsize
\renewcommand{\arraystretch}{1.02}
\setlength{\tabcolsep}{4pt}
\resizebox{\textwidth}{!}{%
\begin{tabular}{@{}llll@{}}
\toprule
\multicolumn{1}{c}{\textbf{Attack}} & 
\multicolumn{1}{c}{\textbf{Mechanism}} & 
\multicolumn{1}{c}{\textbf{Impact}} & 
\multicolumn{1}{c}{\textbf{Experimental Results}} \\
\midrule
\textbf{I-A: Revert-grant}
& Grant before finality
& Unpaid service after settlement failure
& RGP$_0$ 5.18\%; Byzantine: 100\%
\\

\textbf{I-B: Settlement preemption}
& Caller-unbound Permit2 \texttt{settle()}
& Payment without service
& Base Sepolia: preemption yields HTTP~402

\\

\textbf{II: Replay / Idempotency}
& No idempotency/binding
& One payment reused for many grants
& DGR$\,{=}\,n$; live: 248 grants/payment
\\

\textbf{III: Header / Proxy confusion}
& Header ambiguity; caching
& Verification confusion and paid-content leak
& nginx leak: 100\%; live CDN leak
\\

\textbf{IV: Server-selection attacks}
& Metadata gaming + Sybils
& Adversarial capture of agent traffic
& One crafted server: 71.8\%; five Sybils: 60.2\%
\\
\bottomrule
\end{tabular}

}
\end{table*}

\vspace{3pt}
In this paper, we investigate the following question: 


\vspace{3pt}
\textit{
What are the concrete security risks in x402's current design and implementations, and how can they be mitigated without sacrificing its web-native usability?}
\vspace{3pt}

We answer this question through four attack classes, a targeted implementation audit, and live testnet validation (Table \ref{tab:attack-summary}):

\begin{packeditemize}
\item \textbf{Attack~I: settlement-path inconsistencies.}
We study failures between HTTP access control and on-chain settlement. Attack~I-A shows that optimistic execution may grant resources before payment is final. Attack~I-B shows that caller-unbound settlement lets an observer consume the payment authorization before the legitimate facilitator.

\item \textbf{Attack~II: replay and idempotency.}
We analyze how a reusable \texttt{X-PAYMENT} payload can produce multiple HTTP-layer grants when the server does not atomically record a payment identity before releasing the resource.

\item \textbf{Attack~III: web-layer handling.}
We examine how ordinary HTTP infrastructure interacts with payment-gated content. In particular, we validate cache leakage as an end-to-end failure and discuss header ambiguity as a deployment-dependent parser risk.

\item \textbf{Attack~IV: server selection.}
We study how agent-side discovery can be manipulated before the payment protocol even begins. We empirically validate metadata manipulation and Sybil flooding in Bazaar-style server selection, showing that adversaries can bias agents toward malicious paid endpoints.
\end{packeditemize}

These attacks cover the main stages of the x402 workflow: discovery, payment presentation, facilitator verification, on-chain settlement, and final resource delivery. We evaluate them across controlled, testnet, and implementation-level studies.

\vspace{3pt}
Specifically, we summarize our contributions as follows.

\begin{packeditemize}
\item We present a \textit{formalized security abstraction for x402}. We model the main entities, trust boundaries, workflow stages, and settlement assumptions, and abstract four security properties from this model: authorization soundness, payment--service correspondence, replay resistance, and facilitator $k$-atomicity. These properties provide a structured basis for analyzing where x402 holds, where it fails, and how different execution policies change the resulting risk (Theorems~\ref{thm:auth-sound}--\ref{thm:atomicity}, \S\ref{sec:formal-model}--\ref{sec:threat-model}).
 
\item We characterize \textit{five attacks across four classes} with measurable metrics. As shown in Table \ref{tab:attack-summary}, the attacks include settlement-path inconsistencies including revert-grant under optimistic execution (Attack~I-A) and unauthorized settlement preemption (Attack~I-B), replay/idempotency attacks (Attack~II), header/proxy confusion (Attack~III), and agent server-selection manipulation (Attack~IV), each specified with violation events, adversary capability sets, and quantitative indicators (\S\ref{sec-method}).
 
\item We implement a \textit{reproducible testbed} and run over 25{,}000 payment requests across 48~configurations on Hardhat/Anvil and Base Sepolia, supplemented by validation on four production x402 endpoints. The experiments confirm all attack classes, including RGP$_0$ up to 5.18\% with honest facilitators, DGR$\,{=}\,n$ without idempotency, on-chain settlement preemption, production CDN cache leakage, and agent selection bias up to 71.8\% under metadata manipulation and 60.2\% under 5-Sybil flooding. We report 95\% Wilson score confidence intervals for rate metrics (\S\ref{sec-evalu}).

\item We conduct a \textit{systematic multi-implementation audit} of three open-source x402~SDKs and four live endpoints, identifying 11~vulnerabilities across five classes. These include grant-before-settle behavior in a third-party Python SDK, missing resource-identifier binding, fire-and-forget settlement, and absent \texttt{Cache-Control} headers, with severity ratings (\S\ref{sec:audit}).

\item We propose \textit{actionable mitigations}, such as canonical encoding, two-phase settlement, mandatory idempotency with resource binding, and caching hygiene, and report findings via coordinated vulnerability disclosure (\S\ref{sec:discussion}).
\end{packeditemize}

\begin{center}
\fbox{%
\begin{minipage}{0.95\linewidth}
\textit{Responsible disclosure.} 
We privately reported our findings to Coinbase via HackerOne (\#3679163, \#3679179, \#3679220), including issues on x402 permit settlement, Bazaar discovery registration and resource-identifier binding.
\end{minipage}
}
\end{center}

\section{x402 Models and Protocols}
\label{sec-model}

This section presents the protocol flow, assumptions, security properties, and our theoretical results.

\subsection{x402 Overview}

The \textit{x402} protocol reinterprets the HTTP status \texttt{402 Payment Required} as a web-native payment handshake. A typical exchange proceeds in three phases:

\begin{enumerate}[leftmargin = *]
  \item \textbf{Request and quote.} The client sends an ordinary HTTP request, and the server replies with \texttt{402 Payment Required} and a \textit{Payment Requirements} object that quotes the payment terms.
  \item \textbf{Payment presentation.} The client constructs a signed \textit{Payment Payload} and resends the request via the \texttt{X-PAYMENT} header.
  \item \textbf{Verification, settlement, and grant.} The server asks the facilitator to verify and settle the payment, the facilitator submits the transaction on chain, and the server grants the protected resource once settlement succeeds.
\end{enumerate}

\subsection{System Model}\label{sec:formal-model}

We first define the entities, messages, and protocols.

\heading{Notation and entities.}
We model an \textbf{x402 system} as a tuple
$\mathcal{S} = (\mathcal{C}, \mathcal{R}, \mathcal{F}, \mathcal{B}, \mathcal{N}, \mathcal{T}, \lambda)$.
Here, $\mathcal{C} = \{C_i\}$ denotes the set of \emph{clients};
$\mathcal{R} = \{R_j\}$ the set of \emph{resource servers} that host protected resources and issue HTTP 402 responses;
$\mathcal{F} = \{F_\ell\}$ the set of optional \emph{facilitators} that verify payments and/or coordinate on-chain settlement;
$\mathcal{B}$ the \emph{blockchain ledger} used for settlement, modeled as an append-only sequence of blocks with probabilistic finality;
$\mathcal{N}$ the \emph{network and communication infrastructure}, including HTTP/TLS channels, proxies, caches, and blockchain mempools;
$\mathcal{T}$ the global logical time domain; and $\lambda$ the security parameter.

We write $(x \xrightarrow{\;m\;} y, t)$ to denote that at time $t$ entity $x$ sends message $m$ to entity $y$ over $\mathcal{N}$.

\heading{x402 messages.}
We define the primary structured messages exchanged in an x402 protocol execution:
\[
\begin{aligned}
  \mathit{PR} &:= \langle \mathit{resource\_id}, \mathit{amount}, \mathit{token},
  \mathit{chain\_id}, \mathit{receiver}, \mathit{expiry}, \mathit{meta}\rangle \\
  \mathit{PP} &:= \langle \mathit{payment\_id}, \mathit{payer\_addr}, \mathit{amount}, \mathit{chain\_id}, \mathit{nonce}, \mathit{ts}, \sigma \rangle
\end{aligned}
\]
Here, $\mathit{PR}$ denotes \textit{PaymentRequirements} and $\mathit{PP}$ denotes \textit{PaymentPayload}. The symbol $\sigma$ denotes a digital signature produced by the payer over a canonical serialization of the payload fields. We let $\mathit{pay\_id}=\mathit{PP.payment\_id}$ denote a unique identifier derived by the client or facilitator.

\heading{Blockchain model.}
The blockchain $\mathcal{B}$ exposes the following primitives:
$\mathsf{Broadcast}(tx)$, which submits a transaction to the mempool;
$\mathsf{Include}(tx, b)$, which records that $tx$ appears in block $b$ at height $h$;
$\mathsf{Confirmations}(tx, t)$, which returns the number of confirmations observed for $tx$ at time $t$;
$\mathsf{Final}(tx, k)$, which checks whether $tx$ has reached the $k$-confirmation finality threshold; and
$\mathsf{Reorg}(b)$, which captures a chain reorganization that may remove previously included transactions from the canonical chain.
We model inclusion and finality with probabilistic latency: for a broadcast at time $t_0$, inclusion time is a random variable $T_{\text{inc}}$ and finality occurs at $t_0 + T_{\text{fin}}$. We denote the block interval as $T_b$.

\begin{definition}[Chain Finality Bound]
\label{def:chain-finality}
The blockchain $\mathcal{B}$ satisfies the $k$-deep finality property if $\Pr[\mathsf{Reorg}(\text{depth} \ge k)] \le \epsilon_{\mathrm{chain}}(k)$,
where $\epsilon_{\mathrm{chain}}(k)$ is a monotonically decreasing function of $k$. For chains with exponential finality decay, $\epsilon_{\mathrm{chain}}(k) = e^{-\alpha k}$ for a chain-specific constant $\alpha > 0$.
\end{definition}

\heading{x402 protocol.}
An x402 \emph{execution trace} $\mathcal{E}_{x402}$ represents the ordered interactions among the client ($C$), resource server ($R$), facilitator ($F$), and blockchain ($\mathcal{B}$). Specifically, $\mathcal{E}_{x402}$ is a finite sequence of events: $\mathcal{E} = (e_1, e_2, \dots, e_n)$, $e_i \in \mathcal{M} \cup \mathcal{O}$, where $\mathcal{M}$ are off-chain messages (HTTP requests/responses, facilitator API calls) and $\mathcal{O}$ are on-chain observations (transactions broadcast, included, reverted).

Following the official x402 protocol specification~\cite{x402-whitepaper}, we decompose the canonical trace into three subprotocols:
$$\Pi_{x402} \;=\; \Pi_{\mathrm{rq}} \circ \Pi_{\mathrm{pp}} \circ \Pi_{\mathrm{vs}},$$
where $\Pi_{\mathrm{rq}}$ is the request-and-quote subprotocol, $\Pi_{\mathrm{pp}}$ is the payment-presentation subprotocol, and $\Pi_{\mathrm{vs}}$ is the verification-and-settlement subprotocol that culminates in a resource grant.

\vspace{3pt}
\noindent $\Pi_{\mathrm{rq}}$: \codefield{\textit{Request-and-quote subprotocol.}}
The client issues an ordinary web request and receives the quoted payment terms.
\begin{packeditemize}
\item $C \xrightarrow{\texttt{GET /api}} R$: the client requests a protected resource.
\item $R \xrightarrow{\texttt{402 Payment Required}} C$: the server returns one or more \texttt{payment requirements} that quote the amount, token, receiver, and expiry.
\item $C$: the client parses $\mathit{PR}$ and selects an admissible payment option.
\end{packeditemize}

\vspace{3pt}
\noindent $\Pi_{\mathrm{pp}}$: \codefield{\textit{Payment-presentation.}}
The client signs a Payment Payload $\mathit{PP}$ and resends the request in \texttt{X-PAYMENT} header.
\begin{packeditemize}
\item $C$: the client selects a payment method and constructs $\mathit{PP}$.
\item $C \xrightarrow{\texttt{X-PAYMENT:}~\mathit{PP}} R$: the client resends the request with $\mathit{PP}$.
\end{packeditemize}

\vspace{3pt}
\noindent $\Pi_{\mathrm{vs}}$: \codefield{\textit{Verification, settlement, and grant.}}
The server asks the facilitator to verify and settle, the transaction is submitted on chain, and the resource is granted on success.
\begin{packeditemize}
\item $R \xrightarrow{\texttt{/verify}~\langle \mathit{PP},\mathit{PR} \rangle} F$: the server asks the facilitator to verify the presented payment, or equivalently verifies locally.
\item $F \xrightarrow{\texttt{Verifi. Response}} R$: the facilitator returns its validation verdict.
\item $R \xrightarrow{\texttt{/settle}~\langle \mathit{PP},\mathit{PR} \rangle} F$: if verification succeeds, the server asks the facilitator to settle.
\item $F \xrightarrow{\texttt{tx}_{pp}} \mathcal{B}$: the facilitator submits the settlement transaction $\texttt{tx}_{pp}$ on-chain.
\item $\mathcal{B} \xrightarrow{\texttt{confirm}(k)} F$: the chain confirms the transaction to depth $k$.
\item $F \xrightarrow{\texttt{settled}} R$: the facilitator reports settlement status to the server.
\item $R \xrightarrow{\texttt{200 OK, X-PAYMENT-RESPONSE}} C$: the server grants the requested resource after the settlement and emits the terminal application-layer response.
\end{packeditemize}


\subsection{Threat Model}
\label{sec:threat-model}

We then specify the adversarial capabilities and our assumptions.

We consider a probabilistic polynomial-time (PPT) adversary
$\mathcal{A}$ that interacts with the x402 system
$\mathcal{S} = (\mathcal{C}, \mathcal{R}, \mathcal{F}, \mathcal{B}, \mathcal{N}, \mathcal{T}, \lambda)$
as defined in \S\ref{sec:formal-model}.
The adversary may possess different classes of capabilities,
denoted $\mathcal{A}_X$ for domain $X \in
\{\mathcal{N}, \mathcal{C}, \mathcal{F}, \mathcal{B}, \mathcal{R}\}$.

\emph{Adversarial capabilities} are defined as follows:
\begin{packeditemize}

  \item \textit{Network adversary} $\mathcal{A}_{\mathcal{N}}$:
  Can delay, drop, reorder, or replay HTTP messages, consistent with the network layer of $\mathcal{N}$.
  TLS channels guarantee message integrity and confidentiality,
  but $\mathcal{A}_{\mathcal{N}}$ can control timing and scheduling.

  \item \textit{Client adversary} $\mathcal{A}_{\mathcal{C}}$:
  May corrupt an arbitrary subset
  $\mathcal{C}_{\mathcal{A}} \subseteq \mathcal{C}$,
  enabling generation of malformed or replayed Payment Payloads,
  or deviation from honest protocol behavior.

  \item \textit{Facilitator adversary} $\mathcal{A}_{\mathcal{F}}$:
  May corrupt facilitators
  $\mathcal{F}_{\mathcal{A}} \subseteq \mathcal{F}$,
  causing Byzantine behavior such as returning false verifications or prematurely asserting settlement.

  \item \textit{Blockchain adversary} $\mathcal{A}_{\mathcal{B}}$:
  Can observe and manipulate the public mempool;
  may inject, reorder, or replace transactions
  within consensus constraints.
  Cannot violate the chain finality bound (Definition~\ref{def:chain-finality}):
  $\Pr[\text{reorg depth} > k] \le \epsilon_{\text{chain}}(k)$.

  \item \textit{Server adversary} $\mathcal{A}_{\mathcal{R}}$:
  Resource servers are assumed \emph{honest-but-curious}:
  they follow the protocol but may record observable data.
  Fully malicious servers are discussed in \S\ref{sec:discussion}.
\end{packeditemize}

\heading{Structural assumptions.} We make two structural assumptions:

\begin{packeditemize}
    \item \textit{Blockchain independence.}\label{asm:blockchain-indep} Chain reorganizations depend only on the honest/adversarial mining or staking power and block-propagation delays. Conditioned on the broadcast transactions, whether $\mathsf{Reorg}(\mathrm{depth} \ge k)$ occurs is independent of the facilitator's internal state and HTTP-layer events.

\item \textit{Facilitator determinism.} \label{asm:facilitator-det}
An honest facilitator $F$'s report $F.\mathsf{reportFinal}(tx, t)$ is
a deterministic function of its local view of $\mathcal{B}$ at time
$t$. A Byzantine facilitator is modeled as an arbitrary PPT machine.
\end{packeditemize}

\heading{Cryptographic assumptions.}
All cryptographic primitives (e.g., digital signatures, hash functions) used in the system are secure. The adversary can only break them with negligible probability in $\lambda$.

\heading{Composed adversaries.}
Adversaries may combine capabilities:
$\mathcal{A} = \bigcup_{X \in \mathbb{A}} \mathcal{A}_X$ where
$\mathbb{A} \subseteq \{\mathcal{N}, \mathcal{C}, \mathcal{F}, \mathcal{B}, \mathcal{R}\}$, e.g.,
$\mathcal{A}_{\mathcal{N}\mathcal{C}}\text{=}\mathcal{A}_{\mathcal{N}} \cup \mathcal{A}_{\mathcal{C}}$
models a network adversary who also controls a subset of clients.

\heading{Probability space.} 
Suppose a security parameter $\lambda$, we define the probability space
$(\Omega, \mathcal{F}_\sigma, \Pr_\lambda)$ where $\Omega$ is the set
of all execution traces $\mathcal{E}$ generated by a PPT adversary
$\mathcal{A}$ interacting with system $\mathcal{S}$. The
$\sigma$-algebra $\mathcal{F}_\sigma$ is generated by cylinder sets
over finite event prefixes. All probabilities are taken over the joint
randomness of $\mathcal{A}$, the blockchain's block-production process,
and network scheduling.

\subsection{Security Properties}
\label{sec:security-props}

We then define the security goals. A secure x402 system satisfies the following properties, each defined relative to an adversary $\mathcal{A}$ generating executions $\mathcal{E} \leftarrow \mathcal{A}(\mathcal{S})$.

\begin{definition}[Authorization Soundness]\label{def:auth-soundness}
An x402 system achieves \emph{authorization soundness} if, for every PPT adversary $\mathcal{A}$,
the probability that a server $R$ grants access to a client $C$
for resource $res$ at time $t$ without eventual final on-chain settlement
is negligible. Let
\[
E_{\text{auth}} =
\{\exists\, t,\, C,R,res,\mathit{PP} :
R.\text{grant}(C,res,t)
\wedge
\neg\,\mathsf{Final}(tx_{pp},k)\}.
\]
Authorization soundness holds if
$\Pr_{\mathcal{E} \leftarrow \mathcal{A}}[E_{\text{auth}}]
\le \mathsf{negl}(\lambda)$.
\end{definition}

\begin{definition}[Payment--Service Correspondence]\label{def:no-double}
Let $\mathsf{GrantCount}$ $(C,res,\mathcal{E})$ denote the number of
distinct grants for $(C,res)$, and
$\mathsf{Settlements}(\mathit{PP},\mathcal{E})$ the number of
on-chain settlements for the same logical payment identifier.
The consistency failure event is
\begin{align*}
E_{\text{dc}} =
\{\exists\, C,res,\mathit{PP} :
\mathsf{GrantCount}(C,res,\mathcal{E}) \neq
\mathsf{Settlements}(\mathit{PP},\mathcal{E})\}.
\end{align*}

This captures two failure modes:

\begin{packeditemize}
\item grants exceeding settlements (resource server delivers unpaid content, Attack~II);
\item settlement without service (payer charged but denied access, Attack~I-B).
\end{packeditemize}
Payment--service correspondence holds if $\Pr_{\mathcal{E} \leftarrow \mathcal{A}}[E_{\text{dc}}]
\le \mathsf{negl}(\lambda)$.

\end{definition}

\begin{definition}[Replay Resistance]\label{def:replay}
Let $\mathit{PP}$ be a valid PaymentPayload first used at time $t$.
The replay event is
\[
E_{\text{replay}} =
\{\exists\, t' > t :
R.\text{grant}(C,res,t') \wedge
\text{same } \mathit{PP} \text{ reused}\}.
\]
The system is \emph{replay-resistant} if
$\Pr_{\mathcal{E} \leftarrow \mathcal{A}}[E_{\text{replay}}]
\le \mathsf{negl}(\lambda)$.
\end{definition}

\begin{definition}[Facilitator $k$-Atomicity]\label{def:atomicity}
Let $E_{\text{atom}}$ be the event that a facilitator
reports a transaction $tx$ as final before $k$ confirmations:
\[
E_{\text{atom}} =
\{\exists\, tx,\,t :
\mathsf{Confirmations}(tx,t) < k
\wedge
F.\text{reportFinal}(tx,t)\}.
\]
A facilitator is $k$-atomic if
$\Pr_{\mathcal{E} \leftarrow \mathcal{A}}[E_{\text{atom}}]
\le \mathsf{negl}(\lambda)$.
\end{definition}

\subsection{Security Theorems}
\label{sec:theorems}

We now state the main consequences of the model above. The theorems formalize when these security properties hold or fail as a function of execution policy, facilitator honesty, and chain finality. 
We only show key statements; full proofs are placed at Appendix~\ref{app:proofs}.

\begin{theorem}[Authorization Soundness---Conservative Execution]
\label{thm:auth-sound}
Let $\mathcal{S}$ be an x402 system under \emph{conservative execution},
where $R$ grants access only upon receiving $F$'s report that
$\mathsf{Confirmations}(tx_{pp}, t) \ge k$.
Assume (i)~$F$ is honest (i.e., $\mathcal{F} \notin \mathbb{A}$),
(ii)~Assumption~\ref{asm:blockchain-indep} holds, and
(iii)~the blockchain satisfies the $k$-deep finality property
(Definition~\ref{def:chain-finality}).
Then for every PPT adversary
$\mathcal{A}
= \mathcal{A}_{\mathcal{N}}
\cup \mathcal{A}_{\mathcal{C}}
\cup \mathcal{A}_{\mathcal{B}}$,
we have
  $\Pr_{\mathcal{E} \leftarrow \mathcal{A}}[E_{\mathrm{auth}}]
  \;\le\; \epsilon_{\mathrm{chain}}(k)$.
If additionally $\epsilon_{\mathrm{chain}}(k) = e^{-\alpha k}$ for
constant $\alpha > 0$ (exponential finality decay),
then choosing $k = \lceil \alpha^{-1}(\lambda + 1)\ln 2 \rceil$
yields $\Pr[E_{\mathrm{auth}}] \le 2^{-\lambda} = \mathsf{negl}(\lambda)$.
\end{theorem}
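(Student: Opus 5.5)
The plan is to show that, under conservative execution with an honest facilitator, the only way $E_{\mathrm{auth}}$ can occur is a reorganization of depth at least $k$ that removes $tx_{pp}$ after the honest report. The finality bound then caps that probability.

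\textbf{Step 1: reduce a grant to an honest $k$-confirmation report.} Fix any execution in which $R.\text{grant}(C,res,t)$ occurs. By the conservative-execution rule, $R$ released the resource only after receiving from $F$ a report that $\mathsf{Confirmations}(tx_{pp},t')\ge k$ for some $t'\le t$. The adversary $\mathcal{A}_{\mathcal{N}}\cup\mathcal{A}_{\mathcal{C}}\cup\mathcal{A}_{\mathcal{B}}$ has no $\mathcal{A}_{\mathcal{F}}$ component, so $F$ runs honestly.

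Two things could still go wrong at this step, and each must be ruled out:
\begin{itemize}
\item \emph{A forged report.} TLS integrity (part of $\mathcal{A}_{\mathcal{N}}$'s constraints) prevents the network adversary from injecting a fake report. Any residual forgery probability is negligible under the cryptographic assumptions, and I will fold it into the bound or note that it is absorbed.
\item \emph{A payment the server never checked.} A malformed or replayed $\mathit{PP}$ from $\mathcal{A}_{\mathcal{C}}$ cannot yield a report about a transaction other than the one bound to the grant. The server's grant is tied to the $tx_{pp}$ whose confirmations were reported.
\end{itemize}
By facilitator determinism (Assumption~\ref{asm:facilitator-det}), the honest report is a deterministic function of $F$'s local view. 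So at time $t'$, $tx_{pp}$ really was included with at least $k$ confirmations on $F$'s view of the canonical chain.

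\textbf{Step 2: translate non-finality into a deep reorg.} Suppose $\neg\mathsf{Final}(tx_{pp},k)$ holds in the final trace. Then the block containing $tx_{pp}$, already buried under at least $k$ blocks, must later have been removed from the canonical chain. That removal is exactly a $\mathsf{Reorg}$ of depth at least $k$. Hence
\[
E_{\mathrm{auth}}\subseteq\{\text{grant}\}\cap\{\mathsf{Reorg}(\text{depth}\ge k)\text{ affecting }tx_{pp}\}.
\]

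\textbf{Step 3: bound the reorg probability against an adaptive adversary.} This is the step I expect to be the main obstacle: the adversary adaptively chooses HTTP and client actions, and we must ensure it cannot correlate those choices with reorg outcomes to push the probability above $\epsilon_{\mathrm{chain}}(k)$. I will invoke blockchain independence (Assumption~\ref{asm:blockchain-indep}). Conditioned on the broadcast transactions, the reorg event is independent of the facilitator state and of HTTP events. Also, $\mathcal{A}_{\mathcal{B}}$ is constrained by the threat model so that it cannot exceed the finality bound. Conditioning on the $\sigma$-algebra generated by the broadcast transactions then gives $\Pr[\mathsf{Reorg}(\text{depth}\ge k)\mid\cdot]\le\epsilon_{\mathrm{chain}}(k)$. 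Integrating over this conditioning yields $\Pr[E_{\mathrm{auth}}]\le\epsilon_{\mathrm{chain}}(k)$.

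One subtlety is the existential quantifier over many $(C,R,res,\mathit{PP})$, which a naive union bound would multiply by the number of payments. I would argue the statement is per-payment, or equivalently that Definition~\ref{def:chain-finality} bounds the probability of any depth-$k$ reorg in the execution. With the latter reading, a single event covers all grants and no union bound is needed.

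\textbf{Step 4: the parameter choice.} Set $k=\lceil\alpha^{-1}(\lambda+1)\ln 2\rceil$. Then $\alpha k\ge(\lambda+1)\ln 2$, so
\[
e^{-\alpha k}\le 2^{-(\lambda+1)}\le 2^{-\lambda}.
\]
The spare factor of $2^{-1}$ can absorb a negligible cryptographic forgery term from Step 1 if one is added.
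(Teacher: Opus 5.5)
Your proposal is correct and follows the paper's proof essentially step for step: a grant implies an honest $k$-deep confirmation report, $E_{\mathrm{auth}}$ then forces a post-grant reorganization of depth at least $k$, the finality bound caps that probability, and the same choice $k=\lceil\alpha^{-1}(\lambda+1)\ln 2\rceil$ gives $2^{-(\lambda+1)}<2^{-\lambda}$. Your extra remarks (TLS integrity ruling out forged reports, and reading Definition~\ref{def:chain-finality} as bounding any depth-$k$ reorg so that no union bound over payments is needed) make explicit what the paper leaves implicit, and the conditioning in your Step~3 is unnecessary, since the inclusion $E_{\mathrm{auth}}\subseteq\{\mathsf{Reorg}(\mathrm{depth}\ge k)\}$ plus monotonicity already yields $\Pr[E_{\mathrm{auth}}]\le\epsilon_{\mathrm{chain}}(k)$, which is all the paper's inequality chain actually uses.
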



\begin{theorem}[Authorization Soundness Violation---Optimistic Execution]
\label{thm:auth-fail}
Consider an x402 system under \emph{optimistic execution},
where $R$ grants access upon $F$'s verification response
before on-chain finality. For any adversary with capability
$\mathbb{A}_{\mathrm{I}} = \{\mathcal{N}, \mathcal{F}, \mathcal{B}\}$,
let $\Delta = T_{\mathrm{verify}} + \delta$ denote the time from
transaction broadcast to service grant, where $\delta \ge 0$ is
adversarial delay on the $F \to R$ path, and let
$T_{\mathrm{inc}}$ be the inclusion time of $tx_{pp}$. If pre-finality
reorganizations occur with probability at least
$p_{\mathrm{reorg}} > 0$ and are independent of HTTP events
(Assumption~\ref{asm:blockchain-indep}), then
  $\mathrm{RGP}_k
  \;\ge\;
  p_{\mathrm{reorg}}
  \cdot
  \Pr\!\bigl[T_{\mathrm{inc}} + k \cdot T_b > \Delta\bigr]$. In particular, when $k = 0$,
$\mathrm{RGP}_0 \ge p_{\mathrm{reorg}} \cdot \Pr[T_{\mathrm{inc}} > \Delta]$.
\end{theorem}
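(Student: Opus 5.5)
The proof is a lower bound by construction: exhibit one sufficient sub-event for a revert-grant and lower-bound its probability. Take $\mathrm{RGP}_k$ to be the probability that $R$ issues a grant for a payload whose settlement transaction $tx_{pp}$ never reaches $\mathsf{Final}(tx_{pp},k)$ on the canonical chain. The target sub-event is
\[
G \;=\; \{\text{grant issued at time } \Delta\} \,\cap\, \{T_{\mathrm{inc}} + k\cdot T_b > \Delta\} \,\cap\, \{\text{pre-finality reorg removes } tx_{pp}\}.
\]
On $G$ we have a grant while $\neg\mathsf{Final}(tx_{pp},k)$ holds permanently, so $G \subseteq E_{\mathrm{auth}}$. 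This gives $\mathrm{RGP}_k \ge \Pr[G]$.

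\textbf{Step 1: the grant is certain.} Under optimistic execution, $R$ grants as soon as it receives $F$'s verification response. That response is sent at $T_{\mathrm{verify}}$ and delayed by $\delta$ by $\mathcal{A}_{\mathcal{N}}$, or $F$ itself may be Byzantine via $\mathcal{A}_{\mathcal{F}}$. Either way the grant happens at time $\Delta$ after broadcast, independent of the chain state. The first factor of $\Pr[G]$ is therefore $1$.

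\textbf{Step 2: the transaction is not yet final at grant time.} The $k$-th confirmation of $tx_{pp}$ arrives at time $T_{\mathrm{inc}} + k\cdot T_b$, measuring block production at interval $T_b$. So $\{T_{\mathrm{inc}} + k T_b > \Delta\}$ is exactly the event that, at grant time, $tx_{pp}$ is either unincluded or has fewer than $k$ confirmations. In either case it is still exposed to the pre-finality reorganization in the hypothesis. For $k=0$ this reduces to $\{T_{\mathrm{inc}} > \Delta\}$, which gives the stated special case.

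\textbf{Step 3: factor via independence.} This is the main obstacle. I need $\Pr[\text{reorg} \mid T_{\mathrm{inc}}+kT_b>\Delta] \ge p_{\mathrm{reorg}}$. Assumption~\ref{asm:blockchain-indep} makes reorganizations independent of HTTP-layer events and of facilitator state, and $\Delta$ is determined only by those. So conditioning on the timing event does not reduce the reorg probability below its hypothesized floor $p_{\mathrm{reorg}}$, which I read as a lower bound valid for any transaction still in its pre-finality window.

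One subtlety is that $T_{\mathrm{inc}}$ is a chain quantity, not an HTTP event. I would handle this by interpreting $p_{\mathrm{reorg}}$ as a lower bound on reversion probability conditional on the transaction being pre-final, uniformly over the inclusion time. If this needs justification, I would condition on $T_{\mathrm{inc}}=s$ and integrate the pointwise bound $p_{\mathrm{reorg}}$ over the region $s > \Delta - kT_b$.

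Finally, the adversary $\mathcal{A}_{\mathcal{B}}$ can inject or replace transactions, for example a conflicting nonce use. This can only increase the reversion probability, so the inequality is preserved. Multiplying the three factors gives $\mathrm{RGP}_k \ge p_{\mathrm{reorg}}\cdot\Pr[T_{\mathrm{inc}}+kT_b>\Delta]$.
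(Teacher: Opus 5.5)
Your proposal is correct and follows essentially the same route as the paper's proof. The paper also lower-bounds $\mathrm{RGP}_k$ by a sufficient sub-event, namely a grant before $k$-confirmation (its $\mathsf{VulnWindow}=\{T_{\mathrm{inc}}+k\cdot T_b>T_{\mathrm{verify}}+\delta\}$) intersected with a pre-finality reorg $\mathsf{Reorg}_k$, and factors the probability using the blockchain-independence assumption; your version is slightly more careful on two points: you state the containment in the right direction (the paper justifies it by the converse implication), and you explicitly handle the fact that $\mathsf{VulnWindow}$ depends on the chain quantity $T_{\mathrm{inc}}$, by reading $p_{\mathrm{reorg}}$ as a conditional floor uniform in $T_{\mathrm{inc}}$, whereas the paper simply asserts $\Pr[\mathsf{Reorg}_k\cap\mathsf{VulnWindow}]=\Pr[\mathsf{Reorg}_k]\cdot\Pr[\mathsf{VulnWindow}]$.
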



\begin{theorem}[Replay Resistance under Pre-Grant Uniqueness Enforcement]
\label{thm:replay}
Let $\mathcal{S}$ be an x402 system in which $R$ atomically claims each
$(\mathit{pay\_id}, \mathit{resource\_id})$ before granting service,
retains that claim for $\mathrm{TTL}$, and rejects payloads outside a
freshness window $[t-W, t+W]$ with $W < \mathrm{TTL}$. Then for every PPT adversary $\mathcal{A}
= \mathcal{A}_{\mathcal{N}} \cup \mathcal{A}_{\mathcal{C}},$
we have $\forall\, t' \in (t,\, t + \mathrm{TTL}] :
  \quad \Pr[E_{\mathrm{replay}}] = 0$.
Conversely, absent such a pre-grant claim, there exist deployments in
which $\mathcal{A}_{\mathcal{N}}$ replaying $\mathit{PP}$ $n$ times achieves $\mathsf{GrantCount}(\mathit{pay\_id}, \mathcal{E}) = n$ with probability~$1$.

\end{theorem}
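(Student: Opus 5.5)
The proof has two directions, and I will treat them separately: a deterministic safety argument for the first, and an explicit construction for the converse.

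\textbf{Safety direction.} Fix a valid $\mathit{PP}$ whose first use (first grant) occurs at time $t$, and let $\mathit{id}=(\mathit{pay\_id},\mathit{resource\_id})$. By hypothesis, $R$ performs an atomic claim on $\mathit{id}$ in its uniqueness store before releasing the resource. So the grant at $t$ is preceded by a successful claim at some $t_c\le t$, and that claim remains in the store until at least $t_c+\mathrm{TTL}$.

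Now take any later request carrying the same $\mathit{PP}$ at a time $t'\in(t,t+\mathrm{TTL}]$. I will split on the timestamp check.
\begin{enumerate}
\item If the payload's $\mathit{ts}$ falls outside the freshness window $[t'-W,t'+W]$, then $R$ rejects it and no grant occurs.
\item Otherwise the request reaches the claim step. Atomicity (linearizability of the claim, e.g.\ a compare-and-set) guarantees that at most one claim on $\mathit{id}$ succeeds while the entry is live. The request therefore finds $\mathit{id}$ already claimed, and $R$ rejects it.
\end{enumerate}
In both cases $R.\text{grant}(C,res,t')$ does not occur for the reused $\mathit{PP}$. Hence $E_{\mathrm{replay}}$ is the empty event on this interval and has probability exactly $0$, not merely $\mathsf{negl}(\lambda)$.

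The condition $W<\mathrm{TTL}$ is what keeps replays from succeeding just after the window in the statement. Once the claim expires, the timestamp is already stale, since $\mathit{ts}\le t+W<t+\mathrm{TTL}$, so freshness alone blocks the replay. I will note this as a remark.

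\textbf{Where the adversary could interfere.} The two capabilities in play are limited as follows.
\begin{itemize}
\item $\mathcal{A}_{\mathcal{N}}$ controls scheduling, including concurrent delivery of copies of the request. Atomicity of the claim neutralizes this, so this step has to rest on the atomicity assumption rather than on any timing argument.
\item $\mathcal{A}_{\mathcal{C}}$ might try to alter $\mathit{PP}$ so that it maps to a fresh $\mathit{pay\_id}$. If the altered payload shares $\sigma$ and the payer's authorization, it no longer counts as the "same $\mathit{PP}$" in the sense of $E_{\mathrm{replay}}$. If it is a genuinely new signed payload, it is a new payment. Forging a valid $\sigma$ on modified fields succeeds only with negligible probability under the cryptographic assumptions.
\end{itemize}
I expect the main obstacle to be this identity issue: showing that every accepted reuse maps to the same $\mathit{id}$. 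I would handle it by requiring $\mathit{pay\_id}$ to be derived deterministically from the signed fields (e.g.\ a hash of the canonical serialization together with the nonce). Then equal payloads yield equal keys, and the resource binding blocks cross-resource reuse.

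\textbf{Converse.} Here I will exhibit a concrete deployment: the server verifies $\mathit{PP}$ statelessly through the facilitator's \texttt{/verify}, grants the resource, and settles either asynchronously or not per request, with no pre-grant claim. The adversary $\mathcal{A}_{\mathcal{N}}$ records the \texttt{X-PAYMENT} header and resends it $n$ times within the validity period, either concurrently or before settlement consumes the nonce. Every copy passes the stateless check, because signature, amount and expiry are unchanged, and the server keeps no record of prior use. Each copy is therefore granted deterministically, so $\mathsf{GrantCount}=n$ with probability $1$. This matches the live observation of 248 grants per payment.
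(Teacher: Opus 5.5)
Your argument is correct and is essentially the paper's: within the TTL window the live claim forces rejection of any reuse, freshness only matters after expiry, and the converse is the same stateless verify-and-grant construction. Your explicit appeal to claim atomicity against concurrently delivered copies is a useful refinement, since the paper's sequential lookup leaves this implicit.

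Two small repairs are needed. First, date the TTL from the claim rather than the grant: take $t=t_c$, as the paper does by writing $\mathcal{I}[\mathit{pay\_id}]$ at first acceptance. Otherwise the claim is only guaranteed live until $t_c+\mathrm{TTL}$, and replays in $(t_c+\mathrm{TTL},\,t+\mathrm{TTL}]$ are not covered. Second, in your post-TTL remark, $\mathit{ts}\le t+W<t+\mathrm{TTL}$ does not yield $\mathit{ts}<t'-W$; that needs $\mathit{ts}\le t$ (the paper's $\mathit{ts}\approx t$) or $\mathrm{TTL}\ge 2W$.
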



\begin{theorem}[Facilitator $k$-Atomicity as Prerequisite]
\label{thm:atomicity}
If $F$ is not $k$-atomic, i.e.,
$\Pr[E_{\mathrm{atom}}] = \mu$ for some non-negligible $\mu$,
then under Assumption~\ref{asm:blockchain-indep}, even conservative execution
cannot achieve authorization soundness: $\Pr[E_{\mathrm{auth}}] \;\ge\; \mu \cdot p_{\mathrm{reorg}}$.
\end{theorem}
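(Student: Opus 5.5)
The plan is a conditioning argument that relates $E_{\mathrm{auth}}$ to $E_{\mathrm{atom}}$ through Assumption~\ref{asm:blockchain-indep}.

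\textbf{Step 1 (atomicity failure becomes a grant).} Under conservative execution, $R$ grants access exactly when it receives $F$'s report that $tx_{pp}$ has at least $k$ confirmations. So whenever $E_{\mathrm{atom}}$ occurs for the settlement transaction $tx_{pp}$ of some payload, $R$ receives a finality report at a time $t$ with $\mathsf{Confirmations}(tx_{pp},t)<k$. Since $R$ is honest-but-curious, it issues $R.\mathrm{grant}(C,res,t)$. Hence $E_{\mathrm{atom}}$ implies a \emph{premature grant} event $G$, meaning a grant while $tx_{pp}$ is still shallower than $k$. This gives $\Pr[G]\ge\mu$.

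\textbf{Step 2 (premature grant followed by reversion).} Next I would show $\Pr[E_{\mathrm{auth}}]\ge \Pr[G\wedge \text{Rev}]$. Here $\text{Rev}$ is the event that a pre-finality reorganization removes $tx_{pp}$ from the canonical chain and it is never finalized. The key observation is that a grant has already happened and cannot be rolled back. If $\text{Rev}$ then occurs, $\neg\mathsf{Final}(tx_{pp},k)$ holds eventually, which is exactly the event $E_{\mathrm{auth}}$ of Definition~\ref{def:auth-soundness}.

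\textbf{Step 3 (independence).} Write $\Pr[G\wedge\text{Rev}]=\Pr[G]\cdot\Pr[\text{Rev}\mid G]$. The event $G$ depends only on the facilitator's report, that is, its internal state, and on HTTP-layer events. By Assumption~\ref{asm:blockchain-indep}, whether a reorg at depth below $k$ hits $tx_{pp}$ is independent of these, conditioned on the broadcast transactions. As in Theorem~\ref{thm:auth-fail}, $p_{\mathrm{reorg}}$ is the lower bound on the probability of a pre-finality reorganization. It then follows that $\Pr[\text{Rev}\mid G]\ge p_{\mathrm{reorg}}$, and combining the steps gives $\Pr[E_{\mathrm{auth}}]\ge\mu\cdot p_{\mathrm{reorg}}$. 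Because $\mu$ is non-negligible and $p_{\mathrm{reorg}}$ is a positive constant, the product is non-negligible, so soundness fails. This contrasts with Theorem~\ref{thm:auth-sound}, where $\mu=0$ because $F$ is honest.

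\textbf{Main obstacle: Step 3.} The difficulty is that a non-$k$-atomic facilitator, which may be Byzantine, could correlate its premature reports with chain state. For example, it might report early only when it observes the transaction sitting deep enough that a reorg is unlikely, and this would break the product bound. My first attempt would read the assumption literally: the reorg outcome is independent of the facilitator's internal state and HTTP events given the broadcast set. The facilitator's decision to report is such an internal-state event, so the factorization holds. If that reading is too weak, I would instead state the bound for the premature-report event at a fixed confirmation depth $d<k$. I would then take $p_{\mathrm{reorg}}$ as a lower bound on the reorg probability over all depths $d<k$, which again yields the claimed inequality.
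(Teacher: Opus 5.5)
Your proposal is correct and follows essentially the same route as the paper's proof: a premature finality report under conservative execution forces a grant while $tx_{pp}$ sits at some depth $k'<k$. A subsequent reorganization, independent of facilitator reports and HTTP events by the blockchain-independence assumption, then removes $tx_{pp}$, and multiplying the two probabilities gives $\mu\cdot p_{\mathrm{reorg}}$. Your fallback of fixing the premature-report depth $d<k$ and taking $p_{\mathrm{reorg}}$ as a uniform lower bound over such depths is a slightly more careful version of the paper's Step~4, which factors $\Pr[E_{\mathrm{atom}}\wedge\mathsf{Reorg}(\mathrm{depth}\ge k'+1)]$ while leaving $k'$ implicitly random.
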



\begin{corollary}[Confirmation Depth Recommendation]
\label{cor:depth}
For a target authorization-soundness failure probability
$\epsilon_{\mathrm{target}}$, under conservative execution with
an honest facilitator, the minimum confirmation depth is $k^* = \min\{k : \epsilon_{\mathrm{chain}}(k) \le
  \epsilon_{\mathrm{target}}\}$.
For Base (L2 on Ethereum) with average block time $T_b \approx 2\,\mathrm{s}$ and
$\epsilon_{\mathrm{chain}}(k) \approx e^{-\alpha k}$
for chain-specific $\alpha > 0$, we recommend
$k \ge 3$ for resources valued below \$1 (achieving $\epsilon_{\mathrm{target}} < 10^{-2}$) and
$k \ge 12$ for resources above \$10 (achieving $\epsilon_{\mathrm{target}} < 10^{-4}$).
\end{corollary}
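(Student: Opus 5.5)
The corollary follows from Theorem~\ref{thm:auth-sound} together with the monotonicity in Definition~\ref{def:chain-finality}. The argument has two parts: a general statement, which is a minimality argument, and a Base-specific recommendation, which is a calibration of $\alpha$.

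\textbf{General part.} Under conservative execution with an honest facilitator, Theorem~\ref{thm:auth-sound} gives $\Pr[E_{\mathrm{auth}}]\le\epsilon_{\mathrm{chain}}(k)$ for every $k$. So any $k$ with $\epsilon_{\mathrm{chain}}(k)\le\epsilon_{\mathrm{target}}$ meets the target.

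Since $\epsilon_{\mathrm{chain}}$ is monotonically decreasing, the set $\{k:\epsilon_{\mathrm{chain}}(k)\le\epsilon_{\mathrm{target}}\}$ is upward closed in $\mathbb{N}$. It is also nonempty whenever $\epsilon_{\mathrm{chain}}(k)\to 0$, which holds in the exponential case. Its minimum $k^*$ therefore exists, and every $k\ge k^*$ works.

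Minimality should be read relative to the guarantee. Theorem~\ref{thm:auth-sound} only provides an upper bound, so I will state that $k^*$ is the smallest depth for which the bound certifies the target. I will not claim that smaller $k$ must fail. If a lower bound is wanted, one could invoke Theorem~\ref{thm:auth-fail}, but I would not rely on it.

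In the exponential case, solving $e^{-\alpha k}\le\epsilon_{\mathrm{target}}$ gives the closed form
\[
k^*=\left\lceil \alpha^{-1}\ln(1/\epsilon_{\mathrm{target}})\right\rceil .
\]

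\textbf{Base instantiation.} Substituting into the closed form: $k\ge 3$ achieves $\epsilon_{\mathrm{target}}<10^{-2}$ iff $3\alpha>\ln 100\approx 4.61$, i.e.\ $\alpha\gtrsim 1.54$. Likewise, $k\ge 12$ achieves $\epsilon_{\mathrm{target}}<10^{-4}$ iff $12\alpha>\ln 10^4\approx 9.21$, i.e.\ $\alpha\gtrsim 0.77$.

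The two thresholds are not jointly consistent under a single $\alpha$. If $\alpha\ge 1.54$, then depth $6$ already suffices for $10^{-4}$, so depth $12$ is conservative rather than tight. I will therefore present the recommendation as follows: assuming $\alpha\ge \tfrac13\ln 100$, both targets are met by the stated depths, with $k=12$ leaving slack. This slack can be justified as a safety margin against misestimating $\alpha$ and against the higher value at stake.

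\textbf{Main obstacle.} The step I expect to be hardest is justifying the numerical value of $\alpha$ for Base. This is an empirical property of the chain, not something provable within the model. It depends on the sequencer and L1 batch-posting reorg behavior observed over $T_b\approx 2$\,s blocks.

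My plan is to treat $\alpha$ as an explicit hypothesis ($\alpha\ge 1.54$). I would support it with measured reorg-depth frequencies, from the testbed or from public Base data, by fitting $-\ln\Pr[\text{depth}\ge k]/k$. The dollar thresholds would then be framed as a risk-budget heuristic, namely expected loss $\approx$ value $\times\,\epsilon$, and not as a theorem.
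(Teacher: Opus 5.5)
Your proposal is correct and follows the paper's route. You invoke Theorem~\ref{thm:auth-sound} to get $\Pr[E_{\mathrm{auth}}]\le\epsilon_{\mathrm{chain}}(k)$, take the smallest certifying depth, and solve $e^{-\alpha k}\le\epsilon_{\mathrm{target}}$ to obtain $k^*=\lceil\alpha^{-1}\ln(1/\epsilon_{\mathrm{target}})\rceil$. Your treatment of the Base numbers is more careful than the paper's: the paper just calls $k\ge 3$ and $k\ge 12$ illustrative under an assumed calibration, while you state the needed hypothesis $\alpha\ge\tfrac13\ln 100$ and correctly observe that both depths cannot be minimal under one $\alpha$, since $k^*(10^{-4})\approx 2k^*(10^{-2})$.
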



These results clarify how the properties relate. Authorization soundness (Definition~\ref{def:auth-soundness}) is violated by Attack~I when optimistic execution meets reorgs or Byzantine facilitators (Theorem~\ref{thm:auth-fail}). Payment--service correspondence (Definition~\ref{def:no-double}) is violated by Attack~I-B (payment without service) and Attack~II (grants without payment). Replay resistance fails under Attack~II when idempotency is absent (Theorem~\ref{thm:replay}). Facilitator $k$-atomicity is a \emph{prerequisite}: if $F$ reports finality before $k$ confirmations, even conservative servers can be deceived (Theorem~\ref{thm:atomicity}). Since our threat model allows Byzantine facilitators, resource servers that independently verify on-chain finality achieve authorization soundness with respect to $\mathcal{A}_{\mathcal{F}}$.


In the following sections, we instantiate these results through our attacks. We turn the formal failure events into concrete attacks in \S\ref{sec-method}; we then measure the corresponding observables in \S\ref{sec-evalu}; and we examine whether deployed implementations enforce the theorem preconditions in \S\ref{sec:audit}.

\section{Our Attacks}
\label{sec-method}
We now instantiate the security properties from \S\ref{sec:theorems}. Each attack identifies a concrete way to realize one of the formal violation events. Attack~IV is the one exception in scope: it occurs before the payment protocol begins (i.e., the discovery layer) and complements the settlement and replay theorems by showing how an agent can be steered toward a bad x402 endpoint in the first place.


\subsection{Attack I: Settlement-Path Inconsistencies}
\label{sec:attack-i}

x402 decouples \emph{verification} from \emph{settlement}. In the honest path, Client~$C$ requests a paid resource from Resource Server~$R$; $R$ asks Facilitator~$F$ whether the payment looks acceptable; and $F$ later settles against Blockchain~$B$. Attack~I exploits the gap between the HTTP decision at $R$ and the eventual on-chain state at $B$.

Attack~I is the operational form of the authorization and atomicity results. Attack~I-A instantiates Theorem~\ref{thm:auth-fail}: a network- or chain-side adversary widens the gap between ``grant'' and finality, so the server grants before the payment basis is final. Attack~I-B targets a different failure mode: a request-path observer reads the payment header once, races the legitimate facilitator, and consumes a valid authorization without producing the service grant, violating payment-service correspondence.

\subsubsection{\textbf{Attack I-A: Revert-grant under optimistic execution}}\mbox{}\par

\heading{Attack intuition.} The participants are: Client~$C$ (the payer), the resource server $R$
(the party serving resources), Facilitator~$F$ (the verifier and
submitter on the settlement path), and Blockchain~$B$ (the final source
of truth). The honest order is simple: $C$ asks for resources, $R$
checks with $F$, and only then should $R$ release resources.

The vulnerability appears when $R$ optimistically treats a non-final
signal as if it were durable payment. The server may see mempool
presence, shallow inclusion, or a proxy-level ``ok'' from $F$ and
interpret that as finality, even though the payment can still be
replaced away. So the attacker in Attack I-A is not a malformed
client input; it is any network or chain actor that can make the gap
between $R$'s grant and $B$'s finality wide enough to matter.

The attacker may delay or reorder the
network path ($\mathcal{A}_{\mathcal{N}}$), spoof or corrupt the
facilitator ($\mathcal{A}_{\mathcal{F}}$), or influence inclusion,
replacement, and bounded reorganization on the chain
($\mathcal{A}_{\mathcal{B}}$). In short, the attacker needs a way to
make the gap between ``verification looks fine'' and ``settlement is
final'' large enough to exploit.

\smallskip
{\setlength{\fboxsep}{5pt}
\noindent\fcolorbox{black!0}{red!5}{%
\begin{minipage}{0.94\linewidth}
\textcolor{red!65!black}{Failure condition.} A \emph{grant event} $E_{\text{revert}}$ is the moment when
$R$ releases the paid resource to $C$. I-A
succeeds when $R$ grants before the payment reaches the
confirmation threshold, and that same payment later disappears from the
canonical chain.
\end{minipage}}}

\smallskip
Formally, the event is:
\begin{align*}
E_{\text{revert}}=\{g \mid & {\neg\exists\,tx_{pp}}: \mathsf{Final}(tx_{pp},k)\text{ at }t_g \\
&\vee\ tx_{pp}\text{ is later absent from the canonical chain}\}.
\end{align*}

 \begin{figure}[!]
     \centering
    \includegraphics[width=0.9\linewidth]{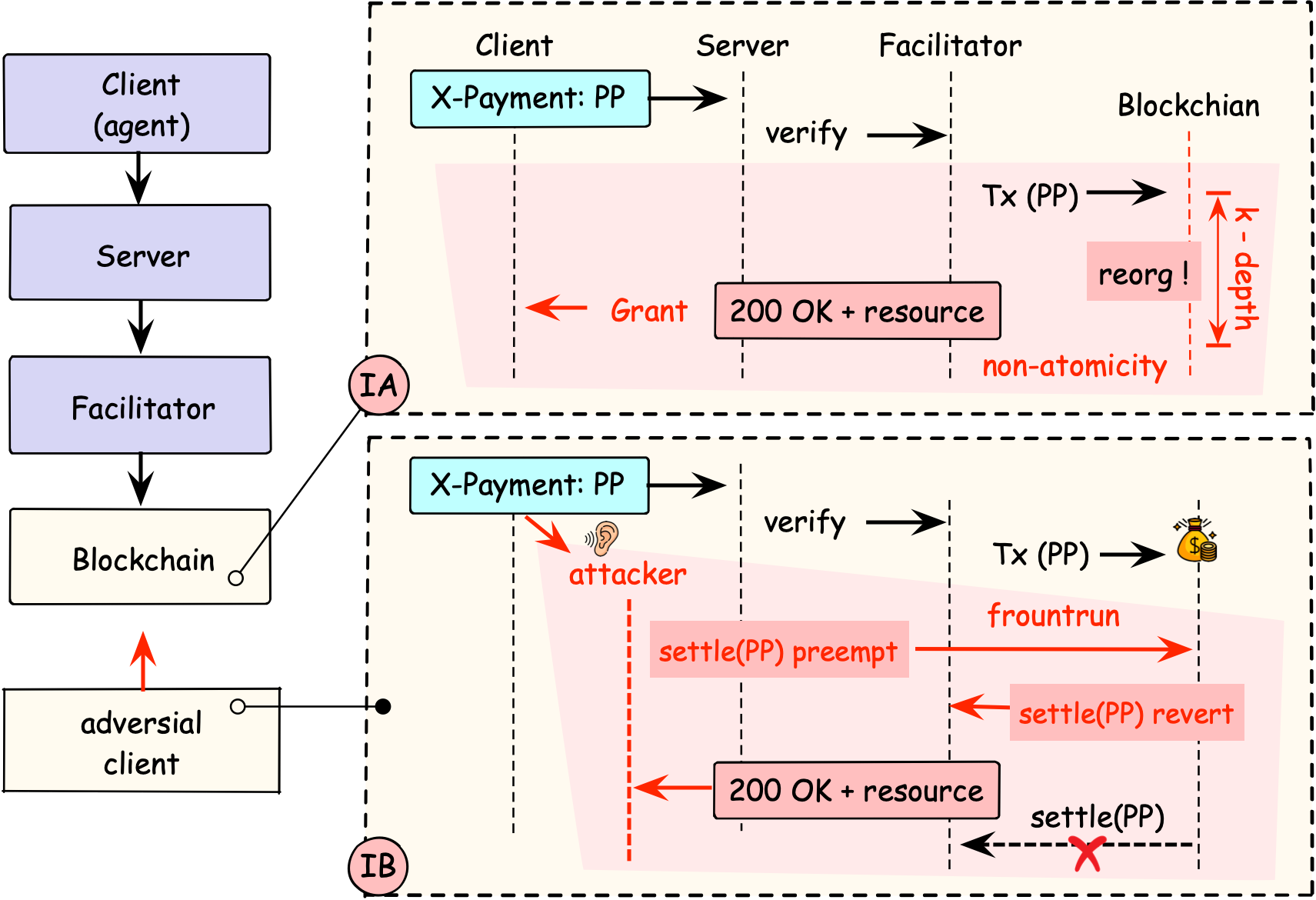}
   \caption{Illustration of Attack~I. In Attack~I-A, the server grants resources before the payment is finalised onchain, so a later reorg may remove the payment. In Attack~I-B, an attacker preempts \texttt{settle(PP)} before the facilitator, so the payment succeeds but the client receives no service.}
\label{fig:attack1}
\vspace{-0.15in}
 \end{figure}

\heading{Attack procedure.} We now walk through the failure in order.

\begin{packeditemize}
\item \textit{Step 1: Client prepares payment.} Client~$C$ requests the
resource from Server~$R$. The server responds with \texttt{402} and a
payment request, after which $C$ sends back the payment header
\texttt{X-PAYMENT}.
\item \textit{Step 2: Server asks for a quick verdict.} Server~$R$
forwards the payment header to Facilitator~$F$ through
\texttt{/verify}. At this point, $F$ may only be checking that the
payment is visible in the mempool or shallowly included.
\item \textit{Step 3: Server grants early.} If $F$ returns
\texttt{ok}, Server~$R$
releases the resource \textcolor{red!70!black}{before}\footnote{We use red to mark attack/failure points throughout this section.} the intended $k$-confirmations. So the key point is: the resource is already gone, but the payment is still reversible.
\item \textit{Step 4: Adversary exploits the gap.} The attacker then
uses the remaining window to prevent durable settlement. Depending on
its position, it can delay the verification path, replace
\textcolor{red!70!black}{$tx_{pp}$ via RBF}, or
trigger a bounded \textcolor{red!70!black}{reorganization} that removes
the transaction from the canonical chain.
\item \textit{Step 5: Server cannot roll the grant back.}
After the reorg or replacement, the payment no longer exists on the canonical chain. But the HTTP response has already been sent, so
$R$ cannot claw the resource back from Client~$C$.
\end{packeditemize}

\heading{Why the attack works.}
Attack~I-A is a classic \emph{time-of-check vs.\ time-of-use} problem: $R$ checks that the payment \emph{looks} acceptable but commits the stronger action of releasing the resource as if settlement were final, and the two moments are not atomic. The pattern mirrors blockchain's ``0-conf acceptance'' risk—a merchant accepts before sufficient confirmations, and a later double-spend or reorg invalidates that acceptance. The key asymmetry is that $\mathcal{B}$ can roll back a pending transfer, but cannot roll back a delivered HTTP response.
The failure is most plausible when an honest facilitator only checks mempool visibility or shallow inclusion while the server treats that reply as final. The facilitator need not be Byzantine; a latency-optimized honest deployment creates the same exposure.





\subsubsection{\textbf{Attack I-B: Unauthorized settlement preemption}}\mbox{}\par

\heading{Attack intuition.} Participants ($C$, $R$, $F$, $\mathcal{B}$) match I-A; what changes is the adversary, a request-path observer or Byzantine server that reads the payment proof before the honest facilitator uses it. The failure is a stolen authorization: the payment is valid but consumed by the wrong caller, so $C$ is charged while $R$ fails to deliver. The attacker can be a network observer ($\mathcal{N}$), e.g., a TLS-terminating proxy or middleware seeing \texttt{X-PAYMENT} in transit, or a Byzantine server ($\mathcal{R}^{*}$) that receives the header and uses it for its own settlement.

This branch uses the same payment object but flips the failure direction: rather than service without durable payment, it produces durable payment without service, the second half of Definition~\ref{def:no-double}.



\vspace{5pt}
{\setlength{\fboxsep}{5pt}
\noindent\fcolorbox{black!0}{red!5}{%
\begin{minipage}{0.94\linewidth}

\textcolor{red!65!black}{Failure condition.} Attack~I-B succeeds when a payment is
consumed on chain, but Server~$R$ never produces the corresponding
service grant. We denote this by $E_{\text{preempt}}$. 
\end{minipage}}}
\vspace{5pt}

The event is the mirror image of replay: replay creates too many grants for one payment, while preemption creates a payment without the expected grant. Formally, the event  $E_{\text{preempt}}$ is:

$E_{\text{preempt}}=\{\exists\,\mathit{PP}: \mathsf{Settlements}.(\mathit{PP})=1 \wedge \textcolor{red!70!black}{\mathsf{GrantCount}}(\mathit{PP})=0\}$

\heading{Attack procedure.} Here the failure is shorter and more familiar than I-A: it is a race for who gets to use the authorization first.

\vspace{3pt}
\begin{packeditemize}
\item \textit{Step 1: Client exposes the payment proof.} Client~$C$
sends \texttt{X-PAYMENT} toward Server~$R$.
\item \textit{Step 2: Attacker copies the authorisation.} A request-path
observer, or a Byzantine server itself, extracts the embedded
authorization from the header; the key is
\textcolor{red!70!black}{copying a usable payment capability}.
\item \textit{Step 3: Attacker races settlement.} Before Facilitator~$F$
submits the honest settlement, the attacker submits the same
authorization from an attacker-controlled account, i.e.,
\textcolor{red!70!black}{settles first}.
\item \textit{Step 4: Honest settlement fails.} Once the nonce is
\textcolor{red!70!black}{consumed}, Facilitator~$F$ no longer has a
usable path to settle the same authorization. Server~$R$ then fails the
request or returns
\texttt{402}, even though Client~$C$ has effectively already paid.
\end{packeditemize}

\heading{Why the attack works.}
The settlement path does not bind the facilitator identity to authorization. In EIP-3009, any observer with the signed authorization can submit the transfer first. In Permit2, a contract without a facilitator field or caller check allows the same preemption. The attacker consumes the nonce, causing the legitimate facilitator's later settlement to fail despite a valid payment intent. Thus, x402 inherits a frontrunning risk whenever payment proofs are exposed and settlement is not bound to the caller.

\textit{Remark.} The risk is especially realistic when payment headers cross logging middleware, TLS terminators, or API gateways, which are components never meant to hold spendable payment material. With caller-unbound settlement, any such observer becomes a potential preemption point.




 \begin{figure}[!]
     \centering
    \includegraphics[width=0.9\linewidth]{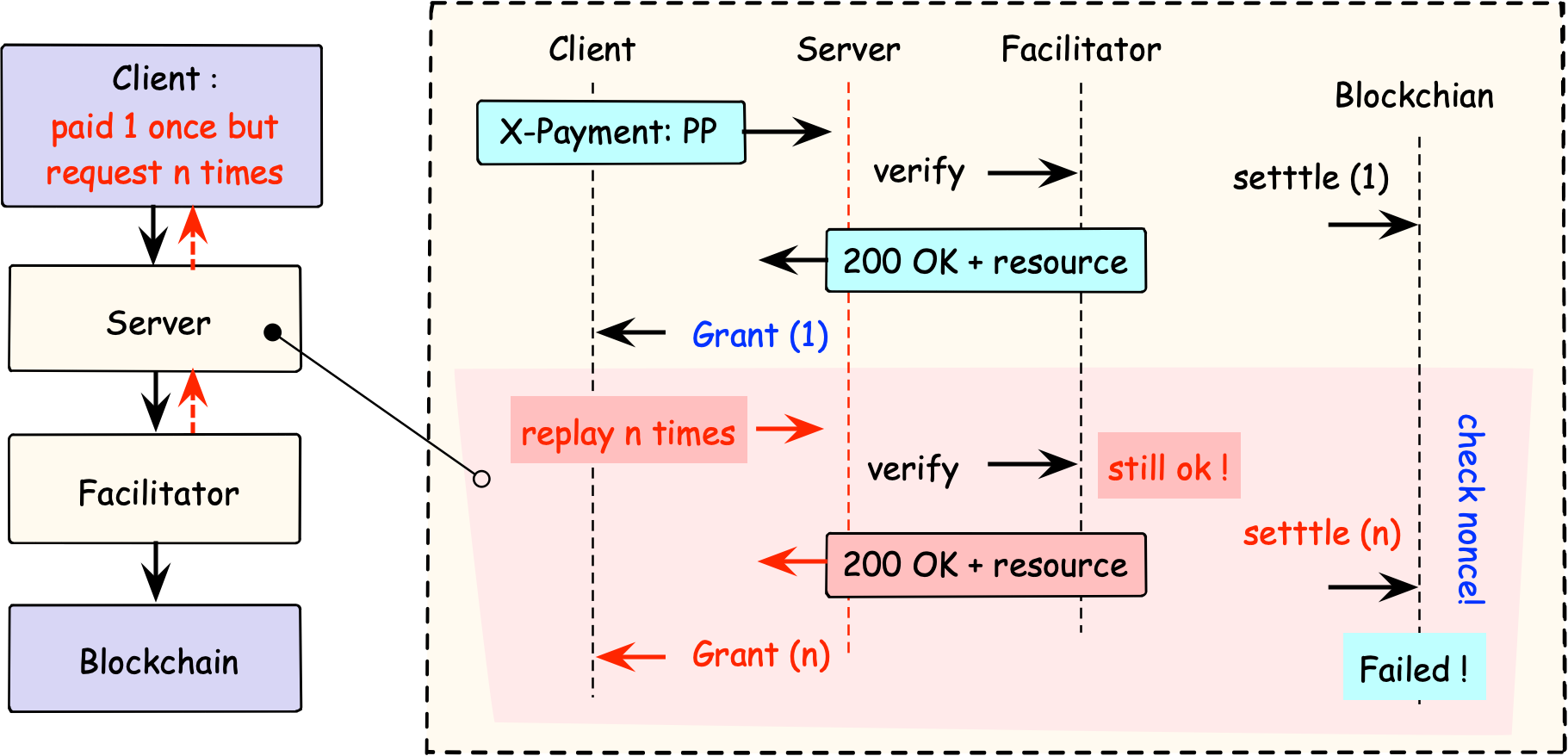}
   \caption{Illustration of Attack~II. A client pays once but replays the same \texttt{X-PAYMENT} payload $n$ times. Without a pre-grant idempotency check, the server may issue $n$ resource grants, while the blockchain accepts only one settlement. }
\label{fig:attack2}
\vspace{-0.1in}
 \end{figure}

\subsection{Attack II: Replay / Idempotency across the HTTP--Chain Boundary}
\label{sec:attack-ii}

Attack~II involves the same four actors as Attack~I: Client~$C$, Server~$R$, Facilitator~$F$, and Blockchain~$B$. Its adversary, however, is usually less dramatic: a replaying client, a retry-happy intermediary, or simply a deployment that fails to record that a payment identity has already been used. 

Attack~II instantiates the converse direction of Theorem~\ref{thm:replay}: unless the server atomically claims $(\mathit{pay\_id},\mathit{resource\_id})$ before granting service, replay becomes a direct payment-service failure, allowing one logical payment to produce multiple HTTP grants.

\heading{Attack intuition.}
In the honest flow, one intended payment buys one protected response; Attack~II breaks that one-to-one mapping. x402 carries a bearer-style payment capability in \texttt{X-PAYMENT}, so if neither $R$ nor $F$ maintains an idempotency record keyed to the logical payment identity, the same capability can be returned through retries, concurrent requests, or semantically equivalent encodings and still appear fresh.

The key asymmetry is that $\mathcal{B}$ may enforce single-use settlement via a nonce or authorization-used bit, while the HTTP side continues to emit grants. The failure is not ``multiple on-chain payments'' but ``multiple off-chain grants for one intended payment'', which is an ordinary web idempotency bug made expensive by attached money. The attacker capability is correspondingly mundane: resend the same payment, or resend it in forms the HTTP layer treats as fresh, until the server grants more than once.




\vspace{3pt}
{\setlength{\fboxsep}{5pt}
\noindent\fcolorbox{black!0}{red!5}{%
\begin{minipage}{0.94\linewidth}
\textcolor{red!65!black}{Failure condition.} A \emph{duplicate-grant
event} $E_{\text{dup}}$ occurs when one logical payment identity leads
to more HTTP-layer grants than the application intended. The payment was
meant to buy one release of resources, but the system releases it
multiple times.
\end{minipage}}}
\vspace{3pt}

In practice, the event is often triggered by ordinary engineering
behavior rather than an obviously malicious client: SDK retries,
load-balancer retries, and concurrent workers can all replay the same
payment if the idempotency store is missing or scoped to the wrong
component. Formally, the event is:
\begin{align*}
E_{\text{dup}}=\{\mathit{pay\_id}\mid {} &
\textcolor{red!70!black}{\mathsf{GrantCount}(\mathit{pay\_id},\mathcal{E})} \\
&\textcolor{red!70!black}{>}
\mathsf{IntendedPayments}(\mathit{pay\_id},\mathcal{E})\}.
\end{align*}

\heading{Attack procedure.}
The failure builds up across layers.
\begin{packeditemize}
\item \textit{Step 1: Client obtains a usable payment header.}
Client~$C$ completes the usual \texttt{402} challenge and prepares a
valid \texttt{X-PAYMENT} value for some payment identity
$\mathit{pay\_id}$.
\item \textit{Step 2: Adversary resends the same payment.} The attacker
\textcolor{red!70!black}{reuses the same \texttt{X-PAYMENT}} across
retries, concurrent requests, or slightly different HTTP encodings that
still refer to the same logical payment.
\item \textit{Step 3: Server checks each replay independently.}
Server~$R$ forwards each replay to Facilitator~$F$ and, without a
deduplication record, treats every arrival as a fresh authorization
attempt.
\item \textit{Step 4: The HTTP layer grants more than once.}
Server~$R$ therefore
releases the resource \textcolor{red!70!black}{again for the same}
logical payment, even though the application intended only one grant.
\item \textit{Step 5: The chain cannot repair the extra grants.}
$B$ may still allow \textcolor{red!70!black}{at most one
successful settlement}, but that invariant arrives too late: the extra
HTTP responses have already been delivered.
\end{packeditemize}

\heading{Why the attack works.}
Attack~II is a classic idempotency failure. Web systems already know
that payment endpoints, callback handlers, and retryable \texttt{POST}
operations need exactly-once protection. x402 inherits the same problem,
except the mistake now sits between a bearer payment header and an
asynchronous on-chain settlement path. The chain's nonce check is not enough, because it protects only contract
state. It does not tell Server~$R$ whether a prior HTTP request with the
same logical payment already caused a resource grant. That is why the
attack survives even when the contract itself behaves correctly: the
contract enforces ``one settlement,'' while the web service still fails
to enforce ``one grant.''

There is also a more subtle version of the same bug: two payloads may be
semantically identical while differing in serialization details such as
JSON ordering, whitespace, or encoding. If a deployment hashes the raw
header bytes instead of a canonical payment identity, replay can hide
inside representation changes rather than byte-for-byte duplicates. So
the attack is not only about obvious duplicates; it is also about the
gap between semantic equality and what the implementation happens to
remember.

\subsection{Attack III: HTTP / Proxy-Level Confusion and Header Manipulation}
\label{sec:attack-III}

Attack~III moves one layer earlier, to the HTTP path itself. The key participants are Client~$C$, an intermediary~$P$ such as a reverse proxy or CDN edge, and Server~$R$. The adversary is not a chain manipulator, but an intermediary or middleware component that can normalize, rewrite, merge, cache, or log traffic before the origin server observes it.  Attack~III carries the payment--service property into the surrounding HTTP infrastructure: even when on-chain settlement is correct, proxy rewriting or shared caching can make the service boundary diverge from the payment boundary.

Unlike Attack~I and Attack~II, the core problem here is not whether the
payment is valid on chain. The problem is that x402 places spendable
payment material inside ordinary HTTP machinery. Once that happens, the
payment header and the paid response inherit the same parsing
ambiguities and caching hazards that already plague authenticated web
traffic. The two most important failures are different enough that it is
better to separate them: (\textit{D1}) one changes the header in flight, and (\textit{D2}) the
other leaves the payment alone but leaks the paid response later.

 \begin{figure}[!]
     \centering
    \includegraphics[width=0.99\linewidth]{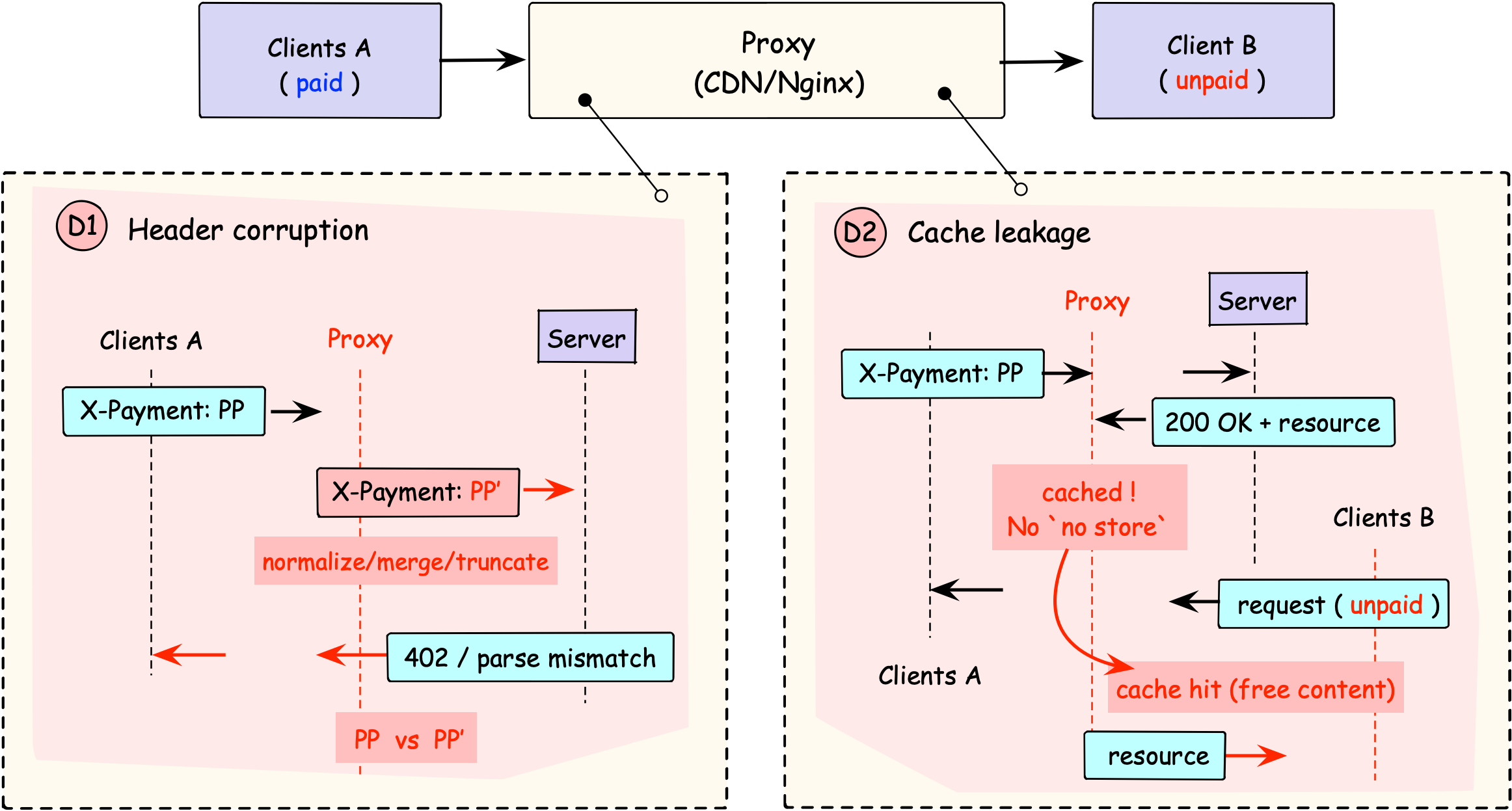}
   \caption{Illustration of Attack~III. A client request passes through an HTTP proxy before reaching the resource server. In D1, the proxy rewrites \texttt{X-PAYMENT}, causing a payload mismatch. In D2, the proxy caches a paid response without \texttt{no-store}, allowing a later unpaid client to receive it for free.}
\label{fig:attack3}
\vspace{-0.1in}
 \end{figure}

\heading{Attack intuition} (Fig.\ref{fig:attack3}).
The participants are paying Client~$C$, intermediary~$P$, Server~$R$,
and sometimes a later unpaid client $C'$. Attack~III is still one
attack class: an HTTP intermediary distorts the payment boundary before
the origin has a clean chance to enforce it. In one manifestation,
Proxy~$P$ changes the \texttt{X-PAYMENT} header in flight, so
Server~$R$ reasons about different payment semantics than Client~$C$
intended. In the other, the request is handled correctly, but the paid
response is later replayed because an intermediary caches it as if it
were ordinary public content.

These two outcomes have a different flavor. Header corruption (\textit{D1}) is a
parser-risk that depends on how multiple components normalize and
interpret the same request. Cache leakage (\textit{D2}) is cleaner and more concrete: the payment may be perfectly valid, yet the paid artifact later reaches
someone who never paid. They belong together because both arise from the
same design move, namely carrying spendable payment material and paid
artifacts through ordinary HTTP infrastructure.

In our evaluation, cache leakage is the primary validated outcome, while
header ambiguity is better understood as a realistic parser-risk that
appears once real proxies, CDNs, API gateways, or observability layers
are inserted in front of the origin.

\vspace{5pt}
{\setlength{\fboxsep}{5pt}
\noindent\fcolorbox{black!0}{red!5}{%
\begin{minipage}{0.94\linewidth}
\textcolor{red!65!black}{Failure condition.} Attack~III succeeds when an
intermediary either causes Server~$R$ to see payment semantics different
from what $C$ sent, or replays a paid response outside the
original paid context so that an unpaid client receives it.
\end{minipage}}}
\vspace{5pt}

Formally, we capture the attack as the union of two 
events:
\begin{align*}
E_{\text{III}} = {} &
\left\{req \mid
\textcolor{red!70!black}{\mathsf{Sem}_{R}(\texttt{X-PAYMENT})}
\textcolor{red!70!black}{\neq}
\mathsf{Sem}_{C}(\texttt{X-PAYMENT})\right\} \\
& \cup
\left\{req' \mid
\mathsf{ServedFromCache}(req')
\wedge
\textcolor{red!70!black}{\neg \mathsf{Paid}(req')}\right\}.
\end{align*}

\heading{Attack procedure.}
The exact path depends on the intermediary, but the overall failure is
simple.
\begin{packeditemize}
\item \textit{Step 1: Client sends a paid request.} Client~$C$ emits a
request carrying \texttt{X-PAYMENT} toward Server~$R$ through
intermediary~$P$.
\item \textit{Step 2: The intermediary handles payment-sensitive HTTP
state as ordinary traffic.} Depending on the deployment, Proxy~$P$ may
\textcolor{red!70!black}{normalize}, \textcolor{red!70!black}{merge}, or \textcolor{red!70!black}{truncate} the payment
header, or it may later \textcolor{red!70!black}{cache} the paid
response because the application omitted strong directives such as
\texttt{no-store}.
\item \textit{Step 3: The wrong artifact becomes authoritative.}
Server~$R$ may verify a mutated header \textit{PP}$'$ rather than the
client's original \textit{PP}, or a later unpaid request may be
answered from a stored copy of a previously paid response.
\item \textit{Step 4: The payment boundary collapses.} The system either
acts on a parser view that \textcolor{red!70!black}{changed in transit}
or returns paid content to an \textcolor{red!70!black}{unpaid client}.
\end{packeditemize}

\heading{Why the attack works.}
Attack~III is closest in spirit to classic request-confusion and authenticated-response cache bugs: HTTP components share no canonical interpretation of transformed headers, and shared caches do not natively understand that a response was produced only after a payment gate—they simply follow local semantics. The x402-specific twist is that the ambiguous object is not just an authentication token but spendable payment material or a paid artifact, which raises the stakes of familiar web behaviors such as duplicate-header merging, canonicalization, truncation, logging, and edge caching, all of which become payment-relevant once they sit on the path.

The two manifestations differ: header corruption is a deployment-dependent parser risk where components disagree on the verified payment object, while cache leakage is a cleaner end-to-end exploit—payment settles correctly, yet the resource is later replayed for free because the cache boundary ignored the payment boundary.




\subsection{Attack IV: Server-Selection Attacks}
\label{sec:attack-IV}

Attack~IV extends the model upstream of $\Pi_{x402}$, from settlement to discovery. While the formal theorems assume a chosen server and analyze whether its payment flow is safe, Attack~IV shows that agent-facing discovery can bias which server receives the payment opportunity in the first place. The relevant participants are a discovery layer such as Bazaar, an LLM agent that selects among candidate servers, honest servers $R_h$, and attacker-controlled servers $R^\ast$. The adversary is a malicious seller that need not break cryptography or settlement logic; it only needs to manipulate how the agent perceives, ranks, and selects available servers.

 \begin{figure}[H]
     \centering
    \includegraphics[width=0.99\linewidth]{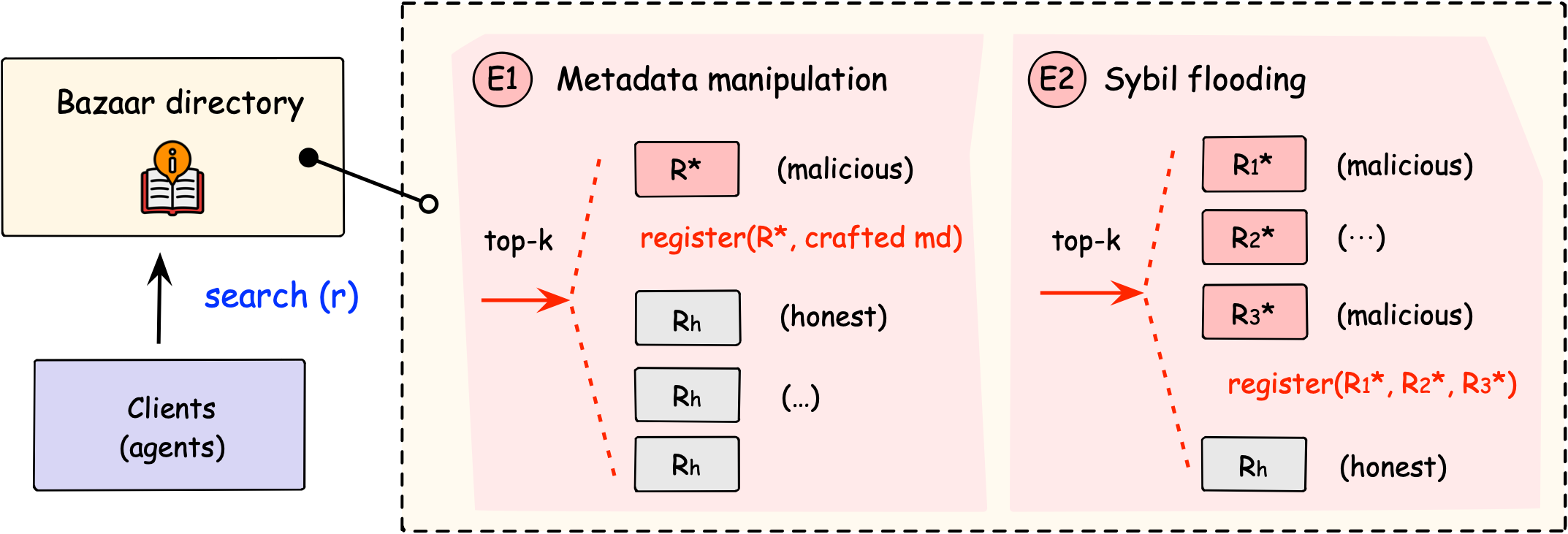}
   \caption{Illustration of Attack~IV.  The attacker, in E1, crafts metadata to place a malicious server in shortlist; in E2, registers multiple Sybil servers to crowd out honest providers.}
\label{fig:attack4}
\vspace{-0.1in}
 \end{figure}

\heading{Attack intuition} (Fig.\ref{fig:attack4}).
In Bazaar-style discovery, the agent does not manually inspect the whole
market. It queries a catalog, receives a shortlist, and then chooses one
server based on names, descriptions, prompts, prices, trust cues, or
latency cues. Attack~IV exploits that decision pipeline. If an attacker
can shape the metadata or multiply its presence through Sybil listings,
the agent can be nudged toward an attacker-controlled paid endpoint
before it ever reasons carefully about the market as a whole.

This matters more in x402 than in an ordinary tool-selection mistake because discovery is directly connected to payment. Once the agent selects a server, that server can immediately issue payment requests and influence the agent's next action. The attacker does not need to compromise the wallet or the settlement contract; it only needs to win the selection step often enough.
The harm is far beyond poor service choice but on-chain economic loss.

\vspace{5pt}
{\setlength{\fboxsep}{5pt}
\noindent\fcolorbox{black!0}{red!5}{%
\begin{minipage}{0.94\linewidth}
\textcolor{red!65!black}{Failure condition.} A
\emph{selection-bias event} occurs when an attacker-controlled server is
chosen more often than the baseline market would justify after discovery
has been manipulated.
\end{minipage}}}
\vspace{5pt}

Let $\mathcal{R}_k$ denote the shortlist returned to the agent for a
query~$q$, and let $\pi_\Delta(R^\ast \mid q,\mathcal{R}_k)$ be the
probability that the agent selects the adversarial server under
manipulation~$\Delta$. We capture the success event as
$E_{\text{sel}}=\{q \mid
 \textcolor{red!70!black}{\pi_\Delta(R^\ast \mid q,\mathcal{R}_k)}
 \textcolor{red!70!black}{>}
 \pi_0(R^\ast \mid q,\mathcal{R}_k)\}$.

\heading{Core manipulations.}
The cleanest forms of Attack~IV are the first two. In \textit{E1:
metadata manipulation}, the attacker rewrites names, descriptions, and
LLM-facing prompts so that $R^\ast$ appears more relevant than honest
competitors. In \textit{E2: Sybil flooding}, the attacker registers
multiple aliases $R^\ast_1,\dots,R^\ast_r$ so that adversarial entries
occupy more of the shortlist returned by discovery.

The remaining variants are better read as extensions than as the core
story: \textit{E3} manipulates price and payment signals, \textit{E4}
leans on latency and availability cues, and \textit{E5} exploits how
facilitator affiliation or payment-path metadata is presented to the
agent. They matter, but they all build on the same underlying move:
shape the shortlist first, then let the agent do the rest.

Among the five variants, E1 and E2 are the cleanest core attacks as
they require only ordinary seller-side registration powers. E3-E5 are
best read as extensions that manipulate richer marketplace signals after
the attacker has already gained a place in the shortlist.

\heading{Attack procedure.}
The manipulation unfolds before the victim ever sends a payment.
\begin{packeditemize}
\item \textit{Step 1: Adversary prepares the listing.} The attacker
\textcolor{red!70!black}{registers crafted metadata} for one malicious
server or several Sybil aliases.
\item \textit{Step 2: Discovery returns a shortlist.} For a target
query~$q$, the discovery service returns top-$k$ candidate
servers to the agent.
\item \textit{Step 3: The shortlist is biased.} Since the adversary
has tuned relevance, quantity, or trust cues, the malicious entries occupy \textcolor{red!70!black}{more of the visible search surface} than under a clean baseline.
\item \textit{Step 4: The agent picks the wrong server.} The agent then selects \textcolor{red!70!black}{the attacker-controlled server}
$R^\ast$ and routes traffic for payments.
\item \textit{Step 5: Economic harm follows selection.} Once discovery
has already steered the workflow toward $R^\ast$, the payment path and
service interaction proceed on top of a manipulated choice.
\end{packeditemize}

\heading{Why the attack works.}
Attack~IV resembles tool-selection and ranking attacks on LLM agents: discovery compresses the catalog into a shortlist, then the model decides from that filtered view. Once the attacker enters the shortlist with persuasive metadata, the decision is often lost before deeper reasoning begins. 

The x402 setting adds two amplifiers. First, Bazaar-like ecosystems often have weak registration controls, so adversaries can cheaply create or rename listings until one lands in the right lexical neighborhood. Second, the chosen server is monetized, so a bad selection translates directly into payment requests, poor service, or financial loss. Sybils compound this by raising the chance that at least one adversarial listing wins. Attack~IV is thus a market-surface attack: the adversary need not outperform every honest server, only be more visible or persuasive when the shortlist is formed.




\section{Evaluation}
\label{sec-evalu}

This section shows our experimental design and results. We first summarize the shared configuration settings that apply across the evaluation. We then evaluate each attack class in turn.

The evaluation follows the violation events defined in \S\ref{sec:security-props}. RGP$_k$ measures authorization-soundness failures (Definition~\ref{def:auth-soundness}) under optimistic or non-atomic settlement; DGR measures replay payment-service failures (Definition~\ref{def:no-double} and Theorem~\ref{thm:replay}); cache leakage captures violations at the HTTP boundary (Definition~\ref{def:no-double}); and selection rate captures the discovery-layer precursor that determines which x402 server receives the payment flow (\S\ref{sec:attack-IV}).

\subsection{Shared Configuration Settings}

\heading{System components.} Our testbed consists of four components:

\begin{packeditemize}
\item \textit{Resource Server $R$} (Node.js~20 / Express): implements the full x402 payment flow, including the initial \texttt{402} challenge, \texttt{X-PAYMENT} receipt,  
\texttt{/verify} call, and access decision. We configure $R$ in an optimistic mode (grant soon after settlement submission) or a conservative mode (wait for explicit confirmation depth $k$).

\item \textit{Facilitator $F$} (Node.js / Express / ethers.js~v6): verifies signatures and settles, supporting \emph{honest} (wait for receipt), \emph{optimistic-bug} (reply ``ok'' on mempool or shallow inclusion), and \emph{byzantine} (return ``ok'' without on-chain enforcement) modes.


\item \textit{Blockchain $\mathcal{B}$}: a local Hardhat chain for controlled reorg simulation and Base Sepolia for realistic latency and live endpoint validation for attack surfaces.


\item \textit{Client harness $\mathcal{C}$}: issues concurrent x402 flows, replays headers, and drives proxy and discovery experiments.
\end{packeditemize}

\heading{Hardware configurations.}
All local experiments run on a single Apple M3 Max workstation with a 16-core CPU under macOS.

\heading{Instrumentation.} 
$R$ records per-grant metadata (\textit{pay\_id}, tx hash, grant time, policy, block number); $F$ logs verify calls, settlement attempts, duplicates, and reverts. Outputs are serialized to timestamped JSON traces for reproducibility.


\heading{Reorg and delay injection.}
On the local chain, a reorg injector snapshots state, mines a branch containing $tx_{pp}$, and reverts with probability $p_{\mathrm{reorg}}$. For larger sweeps, we use an analytic model with a Bernoulli revert trial with probability $p_{\mathrm{reorg}}$ at $k=0$ and $p_{\mathrm{reorg}}^{k}$ for $k>0$. Application-level delay $\delta \in \{0,100,200,400\}\,\mathrm{ms}$ is injected on the $R \rightarrow F$ path. Attack~I-B is evaluated separately via Base Sepolia preemption traces. For Attacks~II and~III, the harness replays headers under controlled concurrency through intermediaries that may mutate or cache content.



\heading{Parameter space.} Attacks~I-A/B vary $k \in \{0,3,6,12\}$, $p_{\mathrm{reorg}} \in \{0,0.01,0.05\}$, $\delta \in \{0,100,200,400\}\,\mathrm{ms}$, facilitator honesty, and execution policy. Attack~II varies replay count $n \in \{1,5,10,50\}$ and idempotency keying. Attack~III runs ${\sim}1{,}000$ requests per condition through nginx, Caddy, and a custom MitM proxy. Attack~IV evaluates 12 categories $\times$ 15 queries across three LLMs with Sybil count $r \in \{1,3,5\}$. Local state is reset between configurations.


\heading{External systems and artifacts.}
To avoid overfitting to a purely local stack, we also evaluate four public x402-enabled testnet endpoints across different application categories: blockchain data, general infrastructure, search/analytics, and domain-specific APIs. They are anonymized in the submission version for coordinated-disclosure reasons.  We also audit additional x402 SDKs in TypeScript, Python, and Rust to compare replay handling, settlement binding, and header processing across codebases.

\heading{Smart contract and statistics.}
Our controlled on-chain experiments use a MockUSDC contract that faithfully implements EIP-3009
\texttt{transferWithAuthorization}, including per-authorizer nonce tracking, EIP-712 signing, and authorization-used events. For rate metrics such as RGP$_k$, DGR, cache leakage incidence, and selection rate, we report point estimates with 95\% Wilson score confidence intervals using $z=1.96$. Latency metrics such as $T_{\mathrm{gf}}$ are summarized using the median and interquartile range (IQR).

\subsection{Evaluating Attacks I-A and I-B}
\label{sec:eval-attack1}




\heading{Experimental setup.} 
For Attack~I-A, we sweep confirmation depth $k \in \{0,3,6,12\}$, reorg probability $p_{\mathrm{reorg}} \in \{0,0.01,0.05\}$, and delay $\delta \in \{0,100,200,400\}\,\mathrm{ms}$ on the local chain, comparing optimistic vs.\ conservative execution and honest vs.\ Byzantine facilitators. Each run uses a fresh EIP-3009 authorization with the chain reset between configurations, and we retain one end-to-end trace illustrating the revert-grant path. Base Sepolia serves as a timing sanity check for the grant-to-finality gap but is not used to estimate revert probability, since public-testnet reorgs cannot be injected.

For Attack~I-B, we validate two settlement paths on Base Sepolia: EIP-3009 preemption against a live endpoint settling via \texttt{transferWithAuthorization}, and a Permit2 PoC against the \texttt{x402ExactPermit2Proxy} contract. In both, the adversary observes \texttt{X-PAYMENT} before the honest facilitator settles.

\heading{Metrics.} For Attack~I-A we measure RGP$_k$, the probability of a revert-grant
event at depth~$k$, and $T_{\mathrm{gf}}$, the time gap between the HTTP grant and durable chain finality. For Attack~I-B we record whether an attacker can consume the authorization before the expected facilitator path settles, yielding payment without service. This is intentionally a binary validation metric rather than a broad rate estimate: the point of
Attack~I-B is to show that the settlement path is caller-unbound and therefore preemptable at all.

\heading{Results on Attack~I-A} (Fig.\ref{fig:attack1-results}).
Attack~I-A shows a security-latency tradeoff. In the local sweep, revert-grant events occur when the server grants on a non-final signal and reorg probability is non-negligible. At $k=0$ in the fixed $T_b=2$\,s main sweep, RGP$_0$ reaches 4.70\% under $p_{\mathrm{reorg}}=0.05$ and $\delta=200$\,ms, and rises to 5.18\% when the injected delay increases to 400\,ms. Increasing $k$ reduces observed risk to the CI floor, but raises $T_{\mathrm{gf}}$ to tens of seconds. Stronger confirmation mitigates Attack~I-A at the cost of user-visible latency.

\begin{figure}[H]
  \centering
  \includegraphics[width=\linewidth]{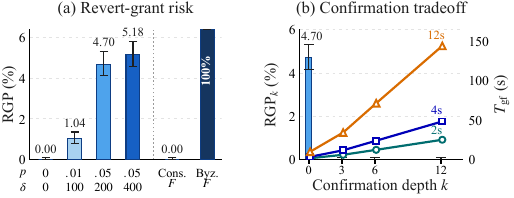}
  \caption{Attack~I-A results with $T_b{=}2$\,s and 5{,}000 requests per condition. 
(a) RGP$_0$ grows with reorg probability and delay. 
(b) Increasing $k$ lowers RGP$_k$ to the CI floor but raises $T_{\mathrm{gf}}$ from 1.6\,s to 25.1\,s.
}
  \label{fig:attack1-results}
  \vspace{-0.1in}
\end{figure}

Control cases bound the behavior: conservative execution with an honest facilitator yields no revert-grants, while a Byzantine facilitator reaches the upper bound. The local trace reproduces the failure path: $C$ pays, $F$ approves, $R$ grants before finality, and an analytic reorg removes $tx_{pp}$. Base Sepolia confirms that the grant-to-finality gap exists outside the lab, but cannot estimate revert probabilities because public-testnet reorgs cannot be injected.

\heading{Results on Attack~I-B.}
Attack~I-B is a feasibility validation rather than a rate estimate. The live EIP-3009 trace exercises the full HTTP path on an anonymized Base Sepolia endpoint. After the payer's \texttt{X-PAYMENT} header is exposed, the same authorization is submitted directly to testnet USDC through \texttt{transferWithAuthorization}; the on-chain transfer succeeds in \href{https://sepolia.basescan.org/tx/0x32fb998ba4647c5c891852a21232bb4a9842b51b1f400f91faf32317e06e556d}{tx~\texttt{0x32fb\ldots{}e556d}}, consumes the nonce, and charges the payer 0.0001~testnet USDC. The endpoint's later settlement attempt then fails on the consumed nonce, and the resource request returns HTTP~402.

We observe the same caller-unbound pattern in a controlled Permit2 proof of concept against \href{https://sepolia.basescan.org/address/0x402085c248EeA27D92E8b30b2C58ed07f9E20001}{\texttt{x402ExactPermit2Proxy}} on Base Sepolia. An unrelated attacker EOA calls \texttt{settle()} with the observed Permit2 signature, and the settlement succeeds in \href{https://sepolia.basescan.org/tx/0x3f3046cb96a738c71a90120445cba03685a6a22fdc3ece6e93883d8bfc5793f1}{tx~\texttt{0x3f30\ldots{}93f1}}, consuming the nonce. A later legitimate replay reverts because the nonce has already been used. Together, the EIP-3009 and Permit2 traces isolate the same preemption condition: an attacker consumes the payment capability before the expected settlement path can use it. The EIP-3009 trace shows the paid-but-denied HTTP outcome, while the Permit2 trace confirms the same nonce-consumption failure at the settlement contract.

\vspace{3pt}
{\setlength{\fboxsep}{5pt}
\noindent\fcolorbox{black!0}{orange!10}{%
\begin{minipage}{0.94\linewidth}
\textcolor{red!65!black}{Takeaways.} 
x402 safety depends on deployment choices: it requires either extra latency or stricter settlement binding. 
\end{minipage}}}

\subsection{Evaluating Attack II}
\label{sec:eval-attack2}

\heading{Experimental setup.}
We replay the same payment payload concurrently with replay count $n \in \{1,5,10,50\}$. Each request carries the same logical payment identity: identical $\mathit{pay\_id}$, nonce, signature, and base64-encoded \texttt{X-PAYMENT} header. The local trials use real settlement on MockUSDC through \texttt{transferWithAuthorization}, so each authorization can settle at most once. We run the same matrix without idempotency and with idempotency keyed by $\mathit{pay\_id}$.

We then test a live Base Sepolia endpoint, Endpoint-2, using a barrier-released batch of 1{,}000 concurrent replays of one payment header. This deployment uses \texttt{PAYMENT-SIGNATURE}, an \texttt{X-PAYMENT}-equivalent payment header. We inspect Python and Rust SDKs to identify implementation choices that may widen the replay window.

\heading{Metrics.}
The primary metric is DGR, the number of HTTP grants per intended payment identity. For live testnet replay, we record the on-chain settlement count to separate duplicate grants from duplicate payments. Grants are counted as HTTP~200 responses. Settlement count is estimated from USDC balance delta divided by the per-request price (\$0.001). Response hashes and balance deltas are logged to distinguish HTTP grants from on-chain settlements.



\heading{Results on Attack~II} (Table~\ref{tab:attack2-results}).
The local replay matrix shows the core mismatch: the chain enforces one settlement, but the HTTP layer may grant multiple times. Without $\mathit{pay\_id}$ deduplication, every replay produces another grant, giving DGR$\,{=}\,n$. With idempotency at the HTTP boundary, duplicate grants disappear. Thus, on-chain nonce discipline does not by itself provide exactly-once service semantics off-chain.

\begin{table}[t]
\centering
\caption{Attack II: HTTP grants when replaying the same \texttt{X-PAYMENT} $n$ times. Each cell shows the number of HTTP-layer grants (= DGR) per single intended payment.}
\label{tab:attack2-results}
\small
\begin{tabular}{lcccc}
\toprule
Implementation & $n{=}5$ & $n{=}10$ & $n{=}50$ & Idempotency \\
\midrule
Testbed (no check)
  & 5 & 10 & 50
  & \xmark \\
Testbed ($\mathit{pay\_id}$)
  & 0 & 0 & 0
  & \cmark \\
SDK-B (Python)
  & 5 & 10 & 50
  & \xmark \\
SDK-C optimistic
  & 5 & 10 & 50
  & \xmark \\
SDK-C pessimistic
  & 1 & 1 & 1
  & via settle \\
\bottomrule
\end{tabular}
\end{table}

The SDK evidence shows that implementation choices can widen this gap. SDK-B has grant-before-settle behavior: its streaming path flushes the paid response before \texttt{settle\_payment} runs, so the client can receive the service even when settlement later fails. SDK-C depends on mode. In optimistic mode, it dispatches settlement via \texttt{tokio::spawn} and returns immediately, leaving a replay window in which repeated requests can pass verification and reach DGR$\,{=}\,n$. In pessimistic mode, settlement occurs synchronously before the response, so duplicate nonces revert and DGR drops to~1.

The live testnet replay confirms the duplicate-grant effect under a deployed endpoint. On Endpoint-2, we replay one payment header across 1{,}000 concurrent requests and measure both grants and USDC settlements. In the strongest positive round, we observe 248 HTTP-layer grants and 1 on-chain settlement. This establishes duplicate granting at the HTTP layer while the on-chain payment is consumed only once.

\vspace{3pt}
{\setlength{\fboxsep}{5pt}
\noindent\fcolorbox{black!0}{orange!10}{%
\begin{minipage}{0.94\linewidth}
\textcolor{red!65!black}{Takeaways.} 
Attack~II does not break settlement: the on-chain authorization is consumed only once. The failure is at the HTTP layer: if the server does not remember used payments (i.e., idempotency), the same payment can trigger multiple grants.
\end{minipage}}}

\subsection{Evaluating Attack III}
\label{sec:eval-attack3}

\heading{Experimental setup.}
We insert intermediary layers between the client and $R$ to test header mutation and cache leakage separately. The local harness uses an Express origin that logs headers and mimics paid 200 responses with \texttt{PAYMENT-RESPONSE} but no default \texttt{Cache-Control}. We test nginx with \texttt{proxy\_cache}, default Caddy, and a custom MitM proxy that injects duplicate payment headers. Header tests use unique URLs to force origin hits. Cache tests reuse one URL and restart proxies between conditions for a cold cache.


We also audit public x402 testnet endpoints settling via Base Sepolia USDC: we record the \texttt{Cache-Control} on a paid response, replay the payment signature five times, and issue an unpaid request for the same resource.

\heading{Metrics.}
We measure whether the payment header changes in transit (Header Mutation Rate) and whether a paid response is later served to an unpaid client (Cache Leakage Incidence). For duplicate-header injection, we also record how the downstream stack resolves multiple payment-header values, since this captures parser ambiguity more directly than a standalone count.

\begin{table}[H]
\centering
\caption{Attack III: proxy and caching behavior across middleware stacks. 1{,}000~requests per proxy per test.}
\label{tab:attack3-results}
\small
\resizebox{\linewidth}{!}{
\begin{tabular}{lccc}
\toprule
\multicolumn{1}{c}{\textbf{Proxy}} & \textbf{Header Mutation}
  & \makecell{\textbf{C. Leak} (no CC)} & \makecell{\textbf{C. Leak} (with CC)} \\
\midrule
nginx (Docker)
  & 0.0\% & \textbf{100.0}\% & 0.0\% \\
Caddy (Docker)
  & 0.0\% & 0.0\% & 0.0\% \\
Custom MitM
  & \textbf{100.0}\% (multi-header) & --- & --- \\
\bottomrule
\end{tabular}
}
\end{table}

\begin{figure*}[t]
  \centering
  \includegraphics[width=0.95\textwidth]{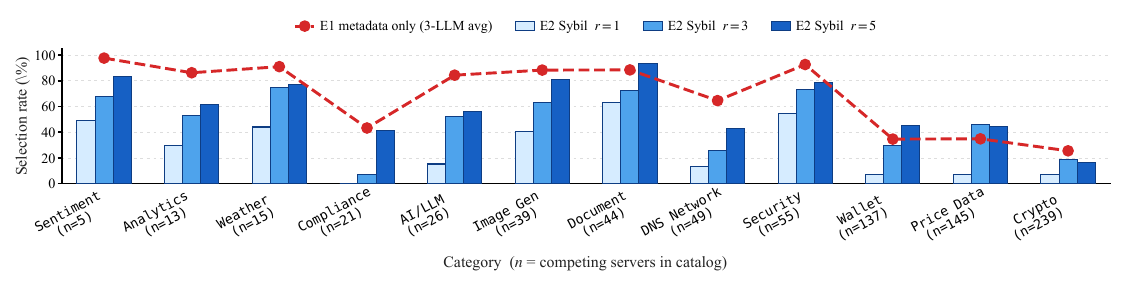}
  \caption{Attack~IV selection rates across 12 categories ordered
    by competition density. Bars: E2 Sybil flooding at
    $r{\in}\{1,3,5\}$ (3-LLM avg). Red dashed line: E1
    metadata-only baseline. $n$ = competing servers per category.}
  \label{fig:attack4-e1e2}
\end{figure*}

\heading{Results on Attack~III} (Table~\ref{tab:attack3-results}).
The local proxy tests show that the two forms of Attack~III differ in stability. Header corruption is possible but configuration-dependent. 
Neither nginx nor Caddy rewrites \texttt{X-PAYMENT} in our setup (0\% mutation across 2{,}000 requests), while a custom MitM proxy injects a duplicate \texttt{X-PAYMENT} in 100\% of 1{,}000 requests, with the Node/Express stack consistently exposing the last value.
This establishes parser ambiguity but does not itself constitute an end-to-end bypass.

Cache leakage is direct. With nginx \texttt{proxy\_cache} enabled and no \texttt{Cache-Control} protection, all 1{,}000 unpaid requests receive the cached paid content. When the upstream emits \texttt{Cache-Control: no-store, private} from a cold cache, 0/1{,}000 unpaid requests are served from cache. Caddy does not cache in either condition.

The live audit shows the same risk. Endpoint-1 returned publicly cacheable paid responses, allowing unpaid clients to retrieve cached content during the CDN cache window. More broadly, audited server-side SDKs do not automatically add \texttt{Cache-Control: no-store} or \texttt{private} to payment-gated responses, leaving cache safety to application or CDN configuration.

\vspace{5pt}
{\setlength{\fboxsep}{3pt}
\noindent\fcolorbox{black!0}{orange!10}{%
\begin{minipage}{0.94\linewidth}
\textcolor{red!65!black}{Takeaways.} 
The operational risk is not header mutation but cache state escaping the payment boundary. Once paid content is publicly cacheable, later clients can retrieve it without paying.
\end{minipage}}}

\subsection{Evaluating Attack IV}
\label{sec:eval-attack4}

\heading{Experimental setup.}
We emulate \texttt{x402\_discover} using a public x402 endpoint\footnote{\url{https://x402scout.com/}} catalog. The retriever matches queries against endpoint names, descriptions, and tags, uses trust score as a tiebreaker, and returns a top-10 shortlist. This approximates Bazaar-style discovery while letting adversarial entries surface by normal ranking.

We test 12 categories with 15 queries each across three LLMs: MiniMax-M2.7, GPT-5.3, and Sonnet~4.5, all at temperature~0.1. In E1, the attacker adds one crafted server per category. In E2, the attacker adds $r \in \{1,3,5\}$ Sybil servers per category, each with a distinct domain and realistic attributes. Overall, we collect 540 LLM decisions for E1 and 1{,}620 for E2, totaling 2{,}160 discovery decisions.

\heading{Metrics.}
The primary metric is selection rate, defined as the fraction of valid model decisions for which an attacker-controlled server is chosen. We also track applicability, since selection requires the malicious listing to appear in the returned shortlist. For E2, we compare aggregate selection rate as the Sybil count increases from $r{=}1$ to $r{=}5$.

\heading{Results on Attack~IV} (Fig.\ref{fig:attack4-e1e2}).
Discovery can be captured before payment execution begins. The attacker does not need to dominate the full catalog; it only needs to enter the shortlist used by the agent. In E1, a single attacker server with crafted metadata achieves 71.8\% selection on MiniMax-M2.7, 69.4\% on GPT-5.3, and 68.8\% on Sonnet~4.5 across all 12 categories. The similar pattern across models suggests that the effect mainly comes from the discovery setup rather than a single model. Table~\ref{tab:attack4-e2-full} breaks the E2 Sybil results down by model.

\begin{table}[H]
\centering
\caption{Attack IV E2 per-model breakdown: Sybil selection
rate~(\%) by model and Sybil count~$r$.}
\label{tab:attack4-e2-full}
\footnotesize
\setlength{\tabcolsep}{4pt}
\renewcommand{\arraystretch}{0.92}
\begin{tabular}{@{}l  rrr rrr rrr@{}}
\toprule

\multicolumn{1}{c}{\multirow{2}{*}{\textbf{Category}}}
& \multicolumn{3}{c}{\textbf{MiniMax-M2.7}}
& \multicolumn{3}{c}{\textbf{GPT-5.3}}
& \multicolumn{3}{c}{\textbf{Sonnet 4.5}} \\

\cmidrule(lr){2-4} \cmidrule(lr){5-7} \cmidrule(lr){8-10}

  & $r{=}1$ & $r{=}3$ & $r{=}5$
  & $r{=}1$ & $r{=}3$ & $r{=}5$
  & $r{=}1$ & $r{=}3$ & $r{=}5$ \\
\midrule
sentiment    &  46.2 &  75.0 &  84.6  &  46.7 &  60.0 &  80.0  &  53.8 &  69.2 &  85.7 \\
analytics    &  30.8 &  58.3 &  61.5  &  26.7 &  46.7 &  60.0  &  30.8 &  53.8 &  64.3 \\
weather      &  46.2 &  76.9 &  85.7  &  40.0 &  73.3 &  73.3  &  46.2 &  73.3 &  73.3 \\
compliance   &   0.0 &   9.1 &  57.1  &   0.0 &  13.3 &  46.7  &   0.0 &   0.0 &  20.0 \\
ai\_llm      &  18.2 &  57.1 &  69.2  &  13.3 &  53.3 &  53.3  &  14.3 &  46.7 &  46.7 \\
image\_gen   &  50.0 &  64.3 &  78.6  &  33.3 &  60.0 &  80.0  &  38.5 &  64.3 &  85.7 \\
document     &  64.3 &  71.4 &  93.3  &  60.0 &  73.3 &  93.3  &  64.3 &  73.3 &  93.3 \\
dns\_network &  16.7 &  30.0 &  41.7  &  13.3 &  20.0 &  40.0  &  10.0 &  27.3 &  46.2 \\
security     &  57.1 &  72.7 &  76.9  &  60.0 &  80.0 &  80.0  &  46.7 &  66.7 &  80.0 \\
wallet       &   7.1 &  36.4 &  42.9  &   6.7 &  26.7 &  46.7  &   6.7 &  26.7 &  46.7 \\
price\_data  &   7.7 &  55.6 &  50.0  &   6.7 &  40.0 &  40.0  &   7.7 &  42.9 &  42.9 \\
crypto       &   7.7 &  23.1 &  15.4  &   6.7 &  20.0 &  20.0  &   6.7 &  13.3 &  13.3 \\
\midrule
\multicolumn{1}{c}{\textbf{Mean}} &
\textbf{29.3} & \textbf{52.5} & \textbf{63.1}
 & \textbf{26.1} & \textbf{47.2} & \textbf{59.4}
 & \textbf{27.1} & \textbf{46.5} & \textbf{58.2} \\
\bottomrule
\end{tabular}
\end{table}

Competition density reduces but does not remove the effect. Sparse categories such as sentiment and weather are highly exposed, while dense categories such as crypto are more resistant. E2 shows the Sybil effect: as the attacker adds realistic aliases, aggregate selection increases from 27.5\% at $r{=}1$ to 60.2\% at $r{=}5$, with document generation reaching 93.3\%.

We further compare E2 with a live-registry snapshot collected in April~2026.
It contains 13{,}760 endpoints across 420 domains, with the top nine domains
accounting for 87.8\% of registrations and one domain accounting for 77.5\%.
We anonymize domains because the relevant issue is concentration rather than
attribution.\footnote{The anonymized snapshot is included in our artifact:
\url{https://anonymous.4open.science/r/x402-attack-FDF1/results/attack4/cdp_bazaar_catalog_pretty.json}.}

\vspace{5pt}
{\setlength{\fboxsep}{5pt}
\noindent\fcolorbox{black!0}{orange!10}{%
\begin{minipage}{0.94\linewidth}
\textcolor{red!65!black}{Takeaways.} 
x402 can fail before payment begins. If discovery steers the agent to an adversarial shortlist, the payment flow inherits that choice. Metadata validation, ranking controls, and Sybil resistance should be treated as security requirements.
\end{minipage}}}

\subsection{Cross-Implementation Audit}
\label{sec:audit}

The audit connects the formal model to implementation practice. Instead of treating SDK issues as isolated bugs, we examine whether real deployments enforce the assumptions required by our theorems: confirmation gating, facilitator-bound settlement, pre-grant uniqueness, resource binding, settle-before-grant ordering, and cache hygiene.

\heading{Experimental setup.}
Instead of reproducing a single exploit trace, we ask whether existing implementations close the underlying gaps. We audit three x402 SDK families: the Coinbase reference TypeScript stack (SDK-A), a third-party Python integration (SDK-B), and a third-party Rust middleware implementation (SDK-C). We also probe four anonymized live testnet endpoints using the same Base Sepolia client as in the earlier experiments.

We cover six properties: confirmation gating before grant, timestamp-window enforcement, $\mathit{pay\_id}$ idempotency, $\mathit{resource\_id}$ binding, settlement completeness (verify and settle before grant), and cache-control protection on payment-gated responses. 

\heading{Results} (Table~\ref{tab:audit}).
The audit shows that basic checks are more common than boundary-closing ones. All three SDKs enforce timestamp windows, but most of the stronger cross-layer checks are missing. Confirmation gating is shallow: SDK-A and SDK-C's pessimistic mode gate at $k{=}1$, while SDK-B and SDK-C's optimistic mode grant at $k{=}0$; these defaults do not provide sufficient reorg resistance under the Attack~I-A model (\S\ref{sec:eval-attack1}). The same pattern appears in replay and binding logic: none of the audited stacks binds payments to the intended resource, and $\mathit{pay\_id}$ idempotency is absent except that SDK-C's pessimistic mode avoids duplicate grants by settling synchronously before responding.

\begin{table}[H]
\centering
\caption{SDK audit results.}
\label{tab:audit}
\small
\begin{threeparttable}
\resizebox{\linewidth}{!}{%
\begin{tabular}{rcccc}
\toprule
\multicolumn{1}{c}{\textbf{Check}} & \textbf{SDK-A (TS)} & \textbf{SDK-B (Py)}
  & \textbf{SDK-C (Rust)} & \textbf{Sev.} \\
\midrule
$k$-confirmation gating
  & \xmark\tnote{$\dagger$} & \xmark & \xmark\tnote{$\dagger$} & H \\
Timestamp window
  & \cmark & \cmark & \cmark & H \\
$\mathit{pay\_id}$ idempotency
  & \xmark & \xmark & \xmark\tnote{*} & C \\
$\mathit{resource\_id}$ binding
  & \xmark & \xmark & \xmark & C \\
Settle before grant
  & \cmark\tnote{c} & \xmark\tnote{a} & $\sim$\tnote{b} & C \\
Cache-Control
  & \xmark & \xmark & \xmark & M \\
\bottomrule
\end{tabular}%
}
\begin{tablenotes}[flushleft]
\footnotesize
\item[] \cmark~=~enforced; \xmark~=~missing; $\sim$~=~partial. Severity: C~=~Critical, H~=~High, M~=~Medium.
\item[a] SDK-B completes before settlement; later failures are dropped. $^b$ SDK-C fails only in 
\item[] optimistic mode. $^c$ SDK-A settles before grant but does not wait long enough for reorg
\item[] resistance. $^*$ SDK-C fails $\mathit{pay\_id}$ idempotency only in optimistic mode.
\item[$\dagger$] SDK-A/C in pessimistic mode gate at $k{=}1$;  SDK-B/C in optimistic mode gate at $k{=}0$.
\end{tablenotes}
\end{threeparttable}
\end{table}

The row-level results show the practical impact. SDK-B violates settle-before-grant because its streaming path finishes before \texttt{settle\_payment} runs, so later settlement failures are dropped. Under nonce pre-consumption, this produces DGR$\,{=}\,n$ without any successful on-chain settlement. SDK-C depends on its mode: pessimistic mode returns HTTP~402 when \texttt{settle()} fails, while optimistic mode sends settlement in a detached \texttt{tokio::spawn} task, allowing the client to receive 200~OK even if settlement later fails.

The cross-implementation view reveals two broader issues. First, no audited SDK binds a payment to the intended resource: a payment signed for resource~$A$ on Endpoint-1 can be replayed for resources~$B$, $C$, and $D$ on the same server. Second, deployed settlement paths do not consistently bind \texttt{msg.sender} to the payer-endorsed facilitator. In EIP-3009, any observer can submit the signed \texttt{transferWithAuthorization} payload. In Permit2, \texttt{x402ExactPermit2Proxy} omits caller restriction in \texttt{settle()}, while \texttt{x402UptoPermit2Proxy} binds the facilitator through its Witness and enforces it.

Cache hygiene is also inconsistent. No SDK automatically adds \texttt{Cache-Control: no-store} or \texttt{private} to payment-gated responses, consistent with the leakage observed when Endpoint-1 exposed publicly cacheable \texttt{Cache-Control}. Live endpoints do enforce some simpler checks: malformed headers return HTTP~402, zero or underpaid amounts are rejected, and expired timestamps are blocked by the facilitator's off-chain \texttt{validBefore} check before any on-chain call.

\vspace{5pt}
{\setlength{\fboxsep}{5pt}
\noindent\fcolorbox{black!0}{orange!10}{%
\begin{minipage}{0.94\linewidth}
\textcolor{red!65!black}{Takeaways.}
The audit shows that x402's risks stem from HTTP-chain boundary rather than isolated SDK bugs. They should be closed at the protocol/implementation levels by default.

\end{minipage}}}

\subsection{Attack costs}
\label{subsec-cost}

Our attacks do not require large upfront capital. For Attack~I-A, we do not price chain reorgs: the relevant failure is that optimistic execution exposes a pre-finality window that can be hit by existing reorg or settlement uncertainty, without any additional x402 payment. Attack~I-B only requires an observed payment authorization and a preempting L2 transaction fee, typically at the cent level (e.g., $\sim\$10^{-2}$). Attack~II requires one valid x402 payment, but the same payment can yield many HTTP grants if pre-grant idempotency is missing (e.g., one $\$0.001$ payment produced hundreds of grants in our live test). Attack~III needs at most one paid request to create a cacheable response, after which unpaid requests to the same URL can be served from cache (about one service price, e.g., $\sim\$10^{-3}$ in our test). Attack~IV has almost no direct on-chain cost because metadata manipulation and Sybil listings only require attacker-controlled endpoint entries, not settlement transactions (protocol-side cost $\approx\$0$, excluding hosting or domain costs).

\section{Mitigations}
\label{sec:mitigation}

x402 needs tighter constraints at three layers: the payment object, the server grant path, and the deployment environment. We propose six mitigations mapped to these layers (Table~\ref{tab:attack-mitigation-map}).

M1--M3 secure the payment capability by binding the resource, the authorized redeemer, and the used-payment state. M4 controls when the service may be released. M5--M6 harden the surrounding HTTP and discovery boundary.

\heading{M1: Canonical encoding and nonce/timestamp.}
Payment payloads should use a single canonical typed encoding, such as EIP-712-style structured data, and sign at least pay\_id, resource\_id, facilitator, amount, token, chain\_id, and expiry under one typed structure. Servers and facilitators should reject malformed encodings, expired timestamps, duplicate pay\_id values, and out-of-window nonces before settlement. This reduces cross-SDK serialization drift and makes both \codefield{resource\_id} scope and \codefield{facilitator} identity explicit signed properties rather than deployment-local conventions.

\begin{table}[t]
\centering
\caption{Attack-to-mitigation design map.}
\label{tab:attack-mitigation-map}
\footnotesize
\renewcommand{\arraystretch}{1}
\begin{tabular}{ccl}
\toprule
\textbf{Attack} & \textbf{Map.} & \multicolumn{1}{c}{\textbf{Rule}} \\
\midrule
\cellcolor{red!11}  All & \cellcolor{orange!13} M1 & Typed fields and freshness validation. \\
\cellcolor{red!11}  I-A & \cellcolor{orange!13}M4 & Release service only after reserve or $k$-finality. \\
\cellcolor{red!11}  I-B & \cellcolor{orange!13}M2 & Require caller to match the endorsed facilitator. \\
\cellcolor{red!11}  II & \cellcolor{orange!13}M3 & Bind resource scope and claim the payment once before grant. \\
\cellcolor{red!11}  III & \cellcolor{orange!13}M5 & Use no-store caching and isolate payment headers. \\
\cellcolor{red!11} IV & \cellcolor{orange!13} M6 & Validate metadata and limit Sybil steering. \\
\bottomrule
\end{tabular}
\end{table}

\heading{M2: Facilitator-bound settlement.}
Across settlement paths, the caller with payment capabilities must be the payer-endorsed facilitator. On Permit2 paths, \texttt{Witness} must include a \codefield{facilitator} field and \texttt{settle()} must enforce \texttt{msg.sender == witness.facilitator}; on EIP-3009 paths, settlement should be routed via a facilitator-bound wrapper or equivalent entry point that performs the same check before invoking \texttt{transferWithAuthorization}. This closes the preemption path behind Attack~I-B. Private relays may reduce mempool-level visibility, but they are only secondary hardening: they do not stop a request-path observer or Byzantine server that has already seen a usable \texttt{X-PAYMENT} header.

\heading{M3: Single-use grants and resource binding.}
Servers should treat X-PAYMENT as a single-use capability and atomically claim the \codefield{pay\_id}--\codefield{resource\_id} pair before releasing any protected response. Resource binding and idempotency are distinct checks: the server must first verify that the signed resource\_id matches the requested endpoint, then reject duplicate claims for the same pair within a bounded TTL window. This prevents cross-resource first use and replay-driven DGR$\,{=}\,n$ over-granting, including the grant-before-settle behavior observed in audited SDKs. We provide example code below to show the intended order of checks.

\medskip
{\setlength{\fboxsep}{7pt}%
\noindent\hspace*{1.4em}\fcolorbox{black!0}{black!3}{%
\parbox{0.85\columnwidth}{%
\scriptsize\ttfamily\raggedright
const payment = parseXPayment(req.header("X-PAYMENT"));\\
const intent =\\
\hspace*{1.4em}await \codefield{verifyTypedPayment}(payment, requirements);\\[0.4ex]
if (intent.resourceId !== req.path) return reply402(res);\\
if (intent.expiresAt <= Date.now()) return reply402(res);\\[0.4ex]
const claimKey =\\
\hspace*{1.4em}[intent.payId, intent.resourceId].join(":");\\
if (!(await \codefield{claims.claim}(claimKey, ttlMs)))\\
\hspace*{1.4em}return rejectReplay(res);\\[0.4ex]
const verdict =\\
\hspace*{1.4em}await facilitator.verify(payment, requirements);\\
if (!verdict.isValid) return reply402(res);\\
if (policy === "conservative") \{\\
\hspace*{1.4em}await \codefield{waitForFinality}(verdict.txHash, k);\\
\} else \{\\
\hspace*{1.4em}await \codefield{reserve}(payment, requirements);\\
\}\\[0.4ex]
res.set("Cache-Control", "no-store, private");\\
return grantPaidResponse(res);
}}}
\medskip

\heading{M4: Two-phase settlement.}
To reduce Attack~I-A risk, low-latency deployments should use a reserve-then-settle pattern, or an equivalent two-phase flow in which the server grants only after observing a concrete settlement precondition. Higher-value resources should instead wait for $k$ confirmations according to Corollary~\ref{cor:depth}. The goal is to separate ``looks payable'' from ``safe to release,'' so that operators choose the residual revert-grant exposure explicitly rather than inheriting it from facilitators.

\heading{M5: Cache and header hygiene.}
Payment-gated responses must carry \codefield{Cache-Control}: \texttt{no-store} or \texttt{private}, and paid routes should bypass intermediary caches where possible. Proxies and CDNs should also avoid normalizing, merging, or duplicating X-PAYMENT values across requests, since cache leakage and parser ambiguity can break payment-service correspondence even when chain settlement is correct.

\heading{M6: Agent-selection defenses.}
For Bazaar-style deployments, mitigation must also cover discovery. Discovery layers should validate metadata, reject unverifiable or prompt-injection-like listing claims, apply Sybil-resistant registration or reputation weighting, and diversify ranking so that no single provider can dominate the shortlist. These defenses do not replace protocol binding, but they reduce the steering surface exposed by Attack~IV. Together, M1--M6 make x402's security boundary explicit: they specify \emph{what} is being paid for, \emph{who} may settle it, and \emph{when} service may be released.

\section{Discussion}
\label{sec:discussion}

In this part, we extend our findings.

\subsection{Security--Latency Tradeoff}

Our results show a security--latency tradeoff in x402 deployment: increasing $k$ reduces measured RGP$_k$ in our local/model-based Attack~I evaluation, but increases the grant-to-finality delay, as summarized in Figure~\ref{fig:attack1-results}. In the fixed $T_b{=}2$\,s sweep, median $T_{\mathrm{gf}}$ grows from 6.0\,s at $k{=}3$ to 25.1\,s at $k{=}12$, while RGP$_k$ stays at the CI floor. In the overnight $T_b{=}12$\,s sweep, median $T_{\mathrm{gf}}$ reaches 34.3\,s, 71.5\,s, and 144.5\,s for $k{=}3,6,12$, respectively. These values should be interpreted as guidance under our controlled reorg model and timing assumptions, not as live-network frequency estimates. In practice, operators should choose $k$ based on resource value, chain characteristics, and acceptable user-visible latency (Corollary~\ref{cor:depth}).


\subsection{Threats to Validity}
\label{subsec-validity}

\heading{Measurement realism.}
Our Attack~I study mainly uses a controlled local-chain model rather than naturally occurring public-chain reorganizations. In the main sweep, revert-grant risk is estimated using the analytic model $P(\mathrm{revert}) = p_{\mathrm{reorg}}^{k}$. The Hardhat snapshot/revert injector is used for concrete trace validation, since repeated full-chain rollbacks are too costly at sweep scale. Base Sepolia provides a timing sanity check, but we observed no natural reorgs during the measurement window. Thus, the reported RGP$_k$ values should be interpreted as controlled estimates rather than live-network frequencies.

\heading{Coverage.}
Our implementation coverage is meaningful but not exhaustive. The audit covers three SDK families (TypeScript, Python, and Rust) and four live testnet endpoints, but other stacks may expose different behaviors. The Attack~IV study covers three LLMs, 12 application categories, and 2{,}160 discovery decisions over an x402scout-based catalog. This is sufficient to show a consistent effect in our setting, but not to claim that all agent architectures, ranking pipelines, or registries behave similarly.

\heading{Metric scope.}
Our metrics target the violation events defined in this paper: RGP$_k$ for revert-grants, DGR for replay-driven over-granting, cache leakage incidence for Attack~III, and selection rate for Attack~IV. They capture direct protocol and deployment failures, but not downstream effects such as reputation loss, user churn, or failures across larger multi-agent workflows. The DGR-1000 live replay test estimates settlement count from balance deltas and observed block heights; this is practical, but may miss settlements that occur outside the observation window.

\subsection{Compatibility with EIPs}
\label{subsec-withEIP}

We do not evaluate an EIP-native x402 implementation in this paper, but
emerging intent standards provide a concrete migration direction for the
protocol. EIP-8004\footnote{\url{https://eips.ethereum.org/EIPS/eip-8004}} (`Trustless Agents') points toward typed payment intents rather than deployment-local JSON payloads. For x402, the main value is not the label of the standard by itself but the opportunity to make security-critical bindings explicit and signed under a common schema. An illustrative x402-compatible typed intent could look as follows:

\medskip
\begin{center}
\fcolorbox{black!0}{black!3}{%
\parbox{0.95\columnwidth}{%
\small
\texttt{\{}\\
\hspace*{0.9em}\codefield{facilitator}\texttt{: 0xFac...,}\\
\hspace*{0.9em}\codefield{resource}\texttt{: "/v1/weather?city=Sydney",}\\
\hspace*{0.9em}\texttt{token: USDC,\ amount: 10000,\ expiry: 1712345678,}\\
\hspace*{0.9em}\codefield{hook}\texttt{: "x402.verifyAndSettle"}\\
\texttt{\}}
}}
\end{center}
\medskip

The highlighted fields show where typed intents help. A signed \codefield{facilitator} field provides a binding point for Attack~I-B, while a signed \codefield{resource} field supports the payment--resource coupling needed for Attack~II. An explicit \codefield{hook} field, used here as an x402-specific integration hook rather than a claim about EIP-8004's current field names, makes the verification and settlement entry point visible to wallets, facilitators, and resource servers under the same typed structure. Typed-intent standards, therefore, offer x402 a path toward canonical encoding, stronger cross-implementation binding, and more auditable interoperability, although they do not eliminate timing or caching risks.

At the signing layer, this is consistent with EIP-712, which defines typed structured-data signing, and with ERC-3009, whose \texttt{transferWithAuthorization} path already uses typed-signature machinery in our study. ERC-8004 is therefore best viewed not as a payment standard for x402 itself, but as a complementary agent-registry layer that could carry richer trust, discovery, and payment-related metadata around x402-enabled services.

\subsection{Design Implications for Protocol Developers}
\label{app:design}

Our findings suggest that x402 implementations should treat payment verification as a cross-layer authorization process, not as a simple HTTP header check. A server should decide how long to wait before granting access based on the value of the requested resource and the stability of the underlying chain. Low-value API responses may tolerate a small confirmation depth, while higher-value resources should wait for stronger confirmation before release.

Implementations should also make each payment usable only for the intended resource. A server should record the tuple $(\mathit{pay\_id}, \mathit{resource\_id})$ and check it before forwarding the payment to the facilitator. This prevents the same payment proof from being reused to obtain multiple grants or to access a different resource. To avoid inconsistent behavior across SDKs, the ecosystem should also converge on a single canonical payment encoding, such as EIP-712 or a fixed JSON serialization, so that all implementations verify the same signed content in the same way.

Finally, paid responses should not be cacheable by shared middleware or edge proxies. Payment-gated routes should emit \texttt{Cache-Control: no-store} or \texttt{private} by default and bypass intermediary caches where possible. Implementations should also keep structured audit logs, including $\mathit{pay\_id}$, $\mathit{resource\_id}$, $\mathit{tx\_hash}$, and timestamp, so that duplicate grants, failed settlements, and user disputes can be investigated after the fact.

\subsection{Limitations and Future Work}
\label{subsec-limits}

Our formal and empirical scope stops at the protocol/facilitator boundary: fully malicious resource servers ($\mathcal{A}_{\mathcal{R}}$) and colluding $R$/$F$ pairs are out of scope. Hardening x402 against those adversaries would likely require client-visible settlement receipts or independent on-chain verification, rather than facilitator trust alone.

For Attack~IV, the main paper validates only E1 (metadata manipulation) and E2 (Sybil flooding). The remaining marketplace threats E3--E5, such as price/latency manipulation and facilitator collusion, remain open evaluation targets. We view these as future work, together with broader ecosystem conformance testing and integration with A2A/x402 agent frameworks.

\section{Related Work}
\label{sec-rw}

We situate our work in five research strands.

\heading{Micropayment and payment channels.}
Layer-two protocols~\cite{gudgeon2020sok,qi2025sok} and payment channels~\cite{aumayr2022sleepy,avarikioti2020cerberus} provide off-chain scaling with on-chain dispute resolution. Probabilistic micropayments~\cite{almashaqbeh2020microcash,takahashi2021probabilistic} reduce settlement overhead via randomized redemption. State-channel protocols (Lightning, Raiden) require on-chain deposits and off-chain signed updates, with security depending on time-bound revocation. Voucher-based models (e.g., Filecoin retrieval markets) use sequence-numbered off-chain vouchers redeemed on-chain.

x402 differs fundamentally: it couples HTTP semantics with per-request on-chain settlement via a one-shot signed payload. No off-chain channel state is maintained, which simplifies deployment but exposes the HTTP--blockchain boundary to replay, duplication, and timing inconsistencies absent from channel-based designs.

\begin{table}[t]
\centering
\caption{Our x402 attacks vs.\ related attacks.}
\label{tab:comparison}
\small
\setlength{\tabcolsep}{3.5pt}
\renewcommand{\arraystretch}{1.}
\begin{threeparttable}
\begin{tabular}{clccccc}
\toprule
\textbf{Target} &
\multicolumn{1}{c}{\textbf{Attack}} & \textbf{\ding{172}} & \textbf{\ding{173}} & \textbf{\ding{174}} & \textbf{\ding{175}} & \textbf{\ding{176}} \\
\midrule
Payment channel & Payout race~\cite{weintraub2024payout} & -- & \cmark & -- & \cmark & F \\
Payment channel & Atomicity~\cite{wu2025atomicity} & -- & \cmark & -- & \cmark & F \\
PoS/PoA & Reorg~\cite{zhang2023time,li2025does} & -- & \cmark & -- & \pmark & E \\
Agent tools & Metadata steering~\cite{faghih2025tool,sneh2025tooltweak,mo2025ama} & -- & -- & -- & \pmark & E \\
MCP & Tool poisoning~\cite{wang2025mcptox,li2026mcpitp,zhang2025msb,lee2026overthinking} & -- & -- & -- & \pmark & E \\
\midrule
\textbf{x402 (ours)} & I: Facilitator race & -- & \cmark & -- & \cmark & E+F \\
\textbf{x402 (ours)} & II: Replay & \cmark & -- & -- & \cmark & E+F \\
\textbf{x402 (ours)} & III: Header/Proxy & -- & -- & \cmark & -- & E \\
\textbf{x402 (ours)} & IV: Server selection & -- & -- & -- & \cmark & E \\
\bottomrule
\end{tabular}
\begin{tablenotes}[flushleft]
\footnotesize
\item \cmark direct/primary coverage. \pmark partial or adjacent coverage. -- not addressed.
\item \ding{172} \textbf{Replay.} \ding{173} \textbf{Reorg/Time.} \ding{174} \textbf{Proxy/HTTP.}
\item \ding{175} \textbf{Economic impact.} \ding{176} \textbf{Evidence} (E empirical, F formal, E+F both).
\end{tablenotes}
\end{threeparttable}
\end{table}

\heading{Web payment and monetization standards.}
The W3C Payment Request API~\cite{w3c2022paymentrequest} mediates checkout via the user agent, Web Monetization~\cite{webmonetization2025} targets continuous content micropayments, and Open Payments~\cite{openpayments} defines wallet-addressed REST APIs with grant-based authorization. Unlike x402, none use HTTP~402 as a per-request handshake binding payment authorization to an application request via a one-shot signed payload, which exposes a distinct HTTP--blockchain trust boundary.

\heading{API security.}
Prior work on API monetization studies behavioral breaking changes~\cite{jayasuriya2024understanding}, agent-ready REST APIs~\cite{mastouri2025rest}, and shared storage exploitation~\cite{nisenoff2025exploiting}. x402 extends this by tying API access to on-chain payments, introducing failure modes at the HTTP--blockchain boundary not covered by traditional API security.

\heading{Cross-layer atomicity and temporal blockchain risk.}
Payment-channel work addresses synchrony assumptions~\cite{avarikioti2021brick,ersoy2022syncpcn} and cross-layer atomicity~\cite{weintraub2024payout,wu2025atomicity}, while PoS attacks~\cite{schwarz2022three,zhang2024max}, BSC finality failures~\cite{li2025does}, and reorg-resilient attestation~\cite{zhang2025available} expose the gap between \emph{included} and \emph{finalized} transactions. x402's optimistic execution creates an analogous temporal gap, which our Attack~I quantifies via RGP$_k$ and $T_{\mathrm{gf}}$.

\heading{Agent tool-selection and MCP attack surfaces.}
Agent tool choice is highly sensitive to text-visible metadata: editing descriptions shifts usage~\cite{faghih2025tool}, and ToolTweak~\cite{sneh2025tooltweak} and Attractive Metadata Attack~\cite{mo2025ama} automate such steering, with ToolCert~\cite{yeon2025quantifying} certifying robustness under deceptive injection. MCP-focused studies broaden the threat model to registry and execution-layer compromise~\cite{hou2025mcp,wang2025mcptox,li2026mcpitp,zhang2025msb,lee2026overthinking,jiang2026sok}. Our Attack~IV specializes this surface to x402's Bazaar discovery layer, focusing on economic steering of server selection rather than arbitrary tool execution.

\section{Conclusion}
\label{sec-conclusion}

x402 is not yet fully secure by construction. We identified five practical attacks leading to either \textit{unpaid-service} or \textit{paid-but-denied} without much cost. We validated attacks experimentally and empirically and propose concrete mitigations.

\bibliographystyle{unsrt}
\bibliography{bib}

\appendix

\section*{Open Science}

We host our evaluation artifacts at \textcolor{teal}{\url{https://anonymous.4open.science/r/x402-attack-FDF1}}.

\section{Full Proofs of Security Theorems}
\label{app:proofs}

We show the full proofs for the theorem statements in \S\ref{sec:theorems}.

\subsection{Proof of Theorem~\ref{thm:auth-sound}}
\label{app:proof-auth-sound}

Theorem~\ref{thm:auth-sound} presents the theorem of \textit{Authorization Soundness-Conservative Execution}.

\begin{proof}[Proof of Theorem~\ref*{thm:auth-sound}] We prove that under conservative execution, any grant issued by the resource server must correspond to a payment transaction that has already reached depth $k$ in the blockchain view of an honest facilitator. Therefore, any authorization failure after the grant necessarily implies a post-grant chain reorganization of depth at least $k$, which is bounded by the chain finality error $\epsilon_{\mathrm{chain}}(k)$. We instantiate this bound under exponential finality decay to obtain negligibility in the security parameter.

\textit{\underline{Step 1:} Grant implies $k$-deep confirmation at grant time.}
Under conservative execution, $R$ issues $R.\mathrm{grant}(C,\mathit{res},t_g)$
only when $F$ reports $\mathsf{Confirmations}(tx_{pp}, t_g) \ge k$.
Since $F$ is honest (Assumption~\ref{asm:facilitator-det}), this report faithfully reflects
$F$'s view of $\mathcal{B}$: at time $t_g$, $tx_{pp}$ has been
included in some block $b$ at depth $\ge k$ in the chain known to $F$.
Formally, let $\mathsf{Depth}(tx,t)$ denote the depth of $tx$'s
containing block in the canonical chain at time $t$. Then:
\[
R.\mathrm{grant}(C,\mathit{res},t_g)
\;\Rightarrow\;
\mathsf{Depth}(tx_{pp}, t_g) \ge k. \tag{$\dagger$}
\]

\textit{\underline{Step 2}: $E_{\mathrm{auth}}$ requires a post-grant reorganization.}
$E_{\mathrm{auth}}$ occurs when $tx_{pp}$ eventually becomes absent
from the canonical chain despite the grant at $t_g$.
By~($\dagger$), $tx_{pp}$ was at depth $\ge k$ at time $t_g$.
For $tx_{pp}$ to later become absent, the chain must undergo a
reorganization of depth $\ge k$ at some time $t' > t_g$:
\[
E_{\mathrm{auth}}
\;\subseteq\;
\bigl\{ \exists\, t' > t_g : \mathsf{Reorg}(\mathrm{depth} \ge k,\, t') \bigr\}.
\]

\textit{\underline{Step 3}: Bounding the reorganization probability.}
By Assumption~\ref{asm:blockchain-indep}, the event $\mathsf{Reorg}(\mathrm{depth} \ge k, t')$
is independent of the grant event conditional on the broadcast of
$tx_{pp}$, and in particular is not influenced by $F$'s reporting
behavior or HTTP-layer scheduling.
Therefore:
\begin{align*}
\Pr[E_{\mathrm{auth}}]
&\le \Pr\!\bigl[\exists\, t' > t_g :
     \mathsf{Reorg}(\mathrm{depth} \ge k,\, t')\bigr] \\
&\le \Pr[\mathsf{Reorg}(\mathrm{depth} \ge k)]
\;\le\; \epsilon_{\mathrm{chain}}(k),
\end{align*}
where the last inequality is Definition~\ref{def:chain-finality}.

\textit{\underline{Step 4}: Parameter choice for negligibility.}
For exponential decay $\epsilon_{\mathrm{chain}}(k) = e^{-\alpha k}$,
set $k^* = \lceil\alpha^{-1}(\lambda+1)\ln 2\rceil$.
Then $e^{-\alpha k^*} \le e^{-(\lambda+1)\ln 2} = 2^{-(\lambda+1)} < 2^{-\lambda}$,
which is negligible in $\lambda$.

\end{proof}

\subsection{Proof of Theorem~\ref{thm:auth-fail}}
\label{app:proof-auth-fail}

Theorem~\ref{thm:auth-fail} refers to the theorem of \textit{Authorization Soundness Violation-Optimistic Execution}.

\begin{proof}[Proof of Theorem~\ref*{thm:auth-fail}]
Let $t_0$ be the time $C$ broadcasts $tx_{pp}$ to the mempool.
Define the events
\begin{align*}
    \mathsf{VulnWindow} 
 = \bigl\{t_g < t_0 + T_{\mathrm{inc}} + k \cdot T_b\bigr\}, &\\
  \mathsf{Reorg}_k
  = \bigl\{\mathcal{A}_{\mathcal{B}} \text{ causes a reorg during }
             [t_g,\,
             t_0 + T_{\mathrm{inc}} + k \cdot T_b]\bigr\}. &
\end{align*}

\textit{\underline{Step 1}: Characterize the grant time.}
Under optimistic execution, the resource server grants at
$t_g = t_0 + T_{\mathrm{verify}} + \delta$.
The grant precedes $k$-confirmation finality exactly when
\[
t_0 + T_{\mathrm{verify}} + \delta
< t_0 + T_{\mathrm{inc}} + k \cdot T_b,
\]
which is equivalent to
$T_{\mathrm{inc}} + k \cdot T_b > T_{\mathrm{verify}} + \delta$.

\textit{\underline{Step 2}: Revert-grant requires both a vulnerable window and a reorganization.}
If a grant occurs before finality and the transaction is later removed
from the canonical chain, then the execution contains both
$\mathsf{VulnWindow}$ and $\mathsf{Reorg}_k$. Hence
$E_{\mathrm{revert}} \supseteq \mathsf{VulnWindow} \cap \mathsf{Reorg}_k$.

\textit{\underline{Step 3}: Apply independence.}
By Assumption~\ref{asm:blockchain-indep}, the chain reorganization
process is independent of HTTP timing once the set of broadcast
transactions is fixed. Therefore
\begin{align*}
\mathrm{RGP}_k
&= \Pr[E_{\mathrm{revert}}] \\
&\ge \Pr[\mathsf{VulnWindow} \cap \mathsf{Reorg}_k] \\
&= \Pr[\mathsf{Reorg}_k] \cdot \Pr[\mathsf{VulnWindow}] \\
&\ge p_{\mathrm{reorg}} \cdot
   \Pr\!\bigl[T_{\mathrm{inc}} + k \cdot T_b
              > T_{\mathrm{verify}} + \delta\bigr].
\end{align*}

\textit{\underline{Step 4}: Special case $k=0$.}
Setting $k=0$ yields
\[
\mathrm{RGP}_0
\ge p_{\mathrm{reorg}} \cdot
\Pr[T_{\mathrm{inc}} > T_{\mathrm{verify}} + \delta].
\]
\end{proof}

\subsection{Proof of Theorem~\ref{thm:replay}}
\label{app:proof-replay}

Theorem~\ref{thm:replay} refers to the \textit{Replay Resistance under Pre-Grant Uniqueness Enforcement}.

\begin{proof}[Proof of Theorem~\ref*{thm:replay}] We proceed with two directions. 

\underline{\textit{Positive direction.}}
Let $t$ be the time $R$ first accepts $\mathit{PP}$ with identifier
$\mathit{pay\_id}$. The server records
$\mathcal{I}[\mathit{pay\_id}] \leftarrow (\mathit{resource\_id}, t)$.

\emph{Case 1: Replay within the TTL window.}
For any replay arriving at time $t' \in (t, t + \mathrm{TTL}]$, the
lookup in $\mathcal{I}$ occurs before any further verification or grant.
Because the entry has not expired, the lookup succeeds and the request
is rejected before access is granted. Hence $E_{\mathrm{replay}}$ cannot
occur in this interval.

\emph{Case 2: Replay after the TTL window.}
Once the $\mathcal{I}$ entry expires, the timestamp defense applies.
The embedded timestamp satisfies $\mathit{ts} \approx t$, while the
server rejects any payload with
$\mathit{ts} \notin [t' - W, t' + W]$.
Since $t' > t + \mathrm{TTL}$ and $W < \mathrm{TTL}$, we have
$t < t' - W$, so $\mathit{ts} < t' - W$, causing rejection by the
freshness check. Thus the theorem's stated guarantee on
$t' \in (t, t + \mathrm{TTL}]$ holds, and the combined defenses are in
fact stronger.

\underline{\textit{Negative direction.}}
Without $\mathcal{I}$, each replayed request is processed independently
by $R$. The resource server forwards
$\langle \mathit{PP}, \mathit{PR} \rangle$ to $F$; the facilitator
verifies the still-valid signature and returns an approval, so each
request triggers a fresh grant. If this is repeated $n$ times, then
\[
\mathsf{GrantCount}(\mathit{pay\_id}, \mathcal{E}) = n
\]
with probability~$1$.

\end{proof}


\subsection{Proof of Theorem~\ref{thm:atomicity}}
\label{app:proof-atomicity}

Theorem~\ref{thm:atomicity} presents the \textit{Facilitator $k$-Atomicity as Prerequisite}.

\begin{proof}[Proof of Theorem~\ref*{thm:atomicity}] 
We prove that facilitator $k$-atomicity is necessary for authorization soundness under conservative execution. If $F$ may report finality before $tx_{pp}$ actually reaches depth $k$, then the server can grant access based on an under-confirmed transaction. A later reorganization can therefore invalidate the payment after the grant, yielding a non-negligible authorization failure probability.

\textit{\underline{Step 1}: Non-atomicity induces effective under-confirmation.}
When $F$ is not $k$-atomic, there exist executions in which
$F$ reports $\mathsf{Final}(tx_{pp}, k)$ at time $t_g$
while $\mathsf{Confirmations}(tx_{pp}, t_g) = k' < k$.
Under conservative execution, $R$ grants access upon this report,
so the effective confirmation depth at grant time is $k'$, not $k$.

\textit{\underline{Step 2}: A subsequent reorganization can invalidate the grant.}
For $tx_{pp}$ at depth $k' < k$, a reorganization of depth
$\ge k'+1 \le k$ suffices to remove $tx_{pp}$ from the canonical chain.
By Assumption~\ref{asm:blockchain-indep}, this reorganization event
depends only on the chain process, not on $F$'s reporting behavior.

\textit{\underline{Step 3}: Independence argument.}
Let $\mathcal{G}$ be the $\sigma$-algebra generated by HTTP-layer
events and facilitator reports up to time $t_g$.
The event $E_{\mathrm{atom}}$ is $\mathcal{G}$-measurable, while the
post-grant reorganization event is determined by the mining process
after $t_g$. Assumption~\ref{asm:blockchain-indep} makes these events
independent.

\textit{\underline{Step 4}: Combine the two events.}
Authorization soundness fails whenever $F$ prematurely reports finality
and the chain later reorganizes deeply enough to remove $tx_{pp}$.
Therefore
\begin{align*}
\Pr[E_{\mathrm{auth}}]
&\ge \Pr[E_{\mathrm{atom}}
     \,\wedge\, \mathsf{Reorg}(\mathrm{depth} \ge k'+1)] \\
&= \Pr[E_{\mathrm{atom}}] \cdot
   \Pr[\mathsf{Reorg}(\mathrm{depth} \ge k'+1)] \\
&\ge \mu \cdot p_{\mathrm{reorg}}.
\end{align*}
Since $\mu$ is non-negligible and $p_{\mathrm{reorg}} > 0$, the
product is non-negligible violating authorization soundness.

\end{proof}


\subsection{Proof of Corollary~\ref{cor:depth}}
\label{app:proof-depth}

Corollary~\ref{cor:depth} shows the \textit{Confirmation Depth Recommendation}.

\begin{proof}[Proof of Corollary~\ref*{cor:depth}]
The first statement is immediate from
Theorem~\ref{thm:auth-sound}: under conservative execution with an
honest facilitator, any choice of $k$ guarantees
\[
\Pr[E_{\mathrm{auth}}] \le \epsilon_{\mathrm{chain}}(k).
\]
To meet a target failure probability
$\epsilon_{\mathrm{target}}$, it therefore suffices to choose the
smallest confirmation threshold satisfying
$\epsilon_{\mathrm{chain}}(k) \le \epsilon_{\mathrm{target}}$, which is
exactly the definition of $k^*$.

For the Base-style exponential calibration
$\epsilon_{\mathrm{chain}}(k) \approx e^{-\alpha k}$, this becomes
\[
k^* = \min\{k : e^{-\alpha k} \le \epsilon_{\mathrm{target}}\}
    = \left\lceil \alpha^{-1}\ln(\epsilon_{\mathrm{target}}^{-1}) \right\rceil.
\]
Instantiating this expression at the two target risk levels used in the
paper, $\epsilon_{\mathrm{target}} = 10^{-2}$ and
$\epsilon_{\mathrm{target}} = 10^{-4}$, yields the illustrative
recommendations $k \ge 3$ and $k \ge 12$ under the chain-specific Base
calibration assumed in the corollary statement.

\end{proof}

\end{document}